\theoremstyle{definition}  
\newtheorem{remark}{Remark}
\newtheorem{lemma}{Lemma}
\newtheorem{theorem}{Theorem}  
\begin{document}

\title{Joint Node Selection and Resource Allocation Optimization for Cooperative Sensing with a Shared Wireless Backhaul}
\author{Mingxin Chen, Ming-Min Zhao, \IEEEmembership{Senior Member,~IEEE,} An Liu, \IEEEmembership{Senior Member,~IEEE,} Min Li,  \IEEEmembership{Member,~IEEE,} and Qingjiang Shi, \IEEEmembership{Member,~IEEE}
\thanks{M. Chen, M. M. Zhao, A. Liu and M. Li are with College of Information Science and Electronic Engineering, Zhejiang University, and also with Zhejiang Provincial Key Laboratory of Information Processing, Communication and Networking, Hangzhou 310027, China (Emails: \{22331108, zmmblack, anliu, min.li\}@zju.edu.cn). Q. Shi is with the School of Software Engineering, Tongji University, Shanghai 201804, China, and also with the Shenzhen Research Institute of Big Data, Shenzhen 518172, China (e-mail: shiqj@tongji.edu.cn). (\emph{Corresponding author: Ming-Min Zhao; Min Li}.)    \par
This work was supported in part by the National Natural Science Foundation of China under Grant 62071416, and in part by the National Natural Science Foundation of China under Grant 62271440.  \par
	
Part of this paper has been accepted at IEEE PIMRC 2024 \cite{mx}, and the additional contributions include the theoretical proof of the convergence of the MCSCA algorithm, computational complexity analysis of the proposed algorithms, along with additional simulation results.}}
% The paper headers

%\IEEEpubid{0000--0000/00\$00.00~\copyright~2021 IEEE}
% Remember, if you use this you must call \IEEEpubidadjcol in the second
% column for its text to clear the IEEEpubid mark.

\maketitle

\begin{abstract}
In this paper, we consider a cooperative sensing framework in the context of future multi-functional network with both communication and sensing ability, where one base station (BS) serves as a sensing transmitter and several nearby BSs serve as sensing receivers. Each receiver receives the sensing signal reflected by the target and communicates with the fusion center (FC) through a wireless multiple access channel (MAC) for cooperative target localization. To improve the localization performance, we present a hybrid information-signal domain cooperative sensing (HISDCS) design, where each sensing receiver transmits both the estimated time delay/effective reflecting coefficient and the received sensing signal sampled around the estimated time delay to the FC. Then, we propose to minimize the number of channel uses by utilizing an efficient Karhunen-Loéve transformation (KLT) encoding scheme for signal quantization and proper node selection, under the Cramér-Rao lower bound (CRLB) constraint and the capacity limits of MAC. A novel matrix-inequality constrained successive convex approximation (MCSCA) algorithm is proposed to optimize the wireless backhaul resource allocation, together with a greedy strategy for node selection. Despite the high non-convexness of the considered problem, we prove that the proposed MCSCA algorithm is able to converge to the set of Karush-Kuhn-Tucker (KKT) solutions of a relaxed problem obtained by relaxing the discrete variables. Besides, a low-complexity quantization bit reallocation algorithm is designed, which does not perform explicit node selection, and is able to harvest most of the performance gain brought by HISDCS. Finally, numerical simulations are presented to show that the proposed HISDCS design is able to significantly outperform the baseline schemes. 
\end{abstract}

\begin{IEEEkeywords}
Cooperative sensing, limited backhaul, multiple access channel,  node selection
\end{IEEEkeywords}

\section{Introduction}
Multi-functional network, which can provide both reliable communication and high-accuracy sensing services, is expected to play a crucial role in many application scenarios of the future 6G system such as autonomous driving, extended reality (XR) and multi-base radar sensing \cite{cui2021integrating,liu2020joint,zhang2021enabling,liu2022survey}. In these scenarios, cooperative sensing by fusing perception data from multiple nodes is considered to be a powerful technique in various applications due to its inherent advantages \cite{liu2022integrated,patwari2005locating,meng2013optimality}. On the one hand, cooperative sensing improves the accuracy of detection by combining information from multiple nodes, as the errors from one node can be mitigated by exploiting the sensing information from other nodes. On the other hand, cooperative sensing is envisioned to reduce the communication overhead significantly while ensure rapid adjustments to node deployment and localization strategy \cite{meng2013optimality}. By sharing information between nodes, the overall cost of sensing networks can be reduced while maintaining or even improving the localization performance \cite{zhu2017localisation}. 

Generally, existing cooperative sensing schemes can be broadly categorized into two types, i.e., information-domain cooperative sensing (IDCS) and signal-domain cooperative sensing (SDCS). In IDCS, each sensing receiver extracts specific location information from the echo signals and then the fusion center (FC) fuses the extracted information  for cooperative target localization  \cite{sayed2005network,xiong2017cooperative}. The common approaches of IDCS are usually based on the time-of-arrival (TOA) \cite{pak2016distributed,wu2015performance}, angle-of-arrival (AOA) \cite{shao2014efficient}, time-difference-of-arrival (TDOA) \cite{wang2011importance}, received signal strength (RSS) \cite{li2006rss}, etc.  Besides, hybrid IDCS methods which combine TOA, TDOA, AOA, or RSS measurements to improve accuracy and robustness, have gained significant research interests \cite{li2018performance,tomic2018robust}. For example, the work \cite{li2018performance} proposed to combine the TOA and AOA information such that both time and angular information measurements are leveraged,  which yields a more accurate estimate in complex environments.
As for the localization performance of IDCS, it is strongly relative to the employed data fusion algorithms \cite{fan2019extended}, which can generally be categorized into two types, i.e., centralized algorithm and distributed algorithm. Centralized algorithms can offer more accurate position estimates in small networks with acceptable complexity \cite{win2011network}, while distributed algorithms are also attractive since they can improve the network robustness and scalability \cite{mazuelas2011information}.
However, the localization accuracy of IDCS is easily affected by non-line-of-sight (NLOS)  propagation environments \cite{aghaie2016localization}. Besides, according to the law of data processing, extracting location information from echo signals always incurs certain information loss, which may lead to localization performance degradation. 

In order to enhance the localization performance, the SDCS scheme has attracted great research interests recently \cite{wang2018parameter,khalili2014cloud,xi2020joint,zhang2023direct}, where each sensing receiver directly sends the echo signals to the FC for cooperative sensing. 
Specifically, in \cite{wang2018parameter}, the authors adopted a uniform quantizer for the echo signals in a cloud multiple-input multiple-output (MIMO) radar system, and Gaussian noise approximation for the quantization error was employed to evaluate the impact of finite backhaul capacity on target localization performance.
The work \cite{khalili2014cloud} considered a multistatic radar setup, where distributed receive antennas were connected to the FC via limited backhaul links, and the localization performance was enhanced by jointly optimizing the code vector and the statistical properties of the noise introduced through backhaul quantization.
In \cite{xi2020joint}, a cooperative radar sensing system based on one-bit sampling of the received radar echoes was studied, and it was shown that the application of one-bit sampling significantly reduces the hardware cost, energy consumption and systematic complexity.
Besides, the work \cite{zhang2023direct}, studied a low-bit direct localization method, where a Cramér-Rao lower bound (CRLB)-based objective function was designed to obtain the optimum quantization thresholds for each receiver, and it was shown that superior localization performance over the IDCS scheme can be achieved.
As a brief summary, we can see that the SDCS scheme is able to achieve better  sensing performance as compared to the IDCS scheme due to the ability of fully utilizing the information contained in the echoes. However, it comes with higher communication overhead and power consumption load, especially when the number of sensing receivers is large. 

Motivated by the above,  a hybrid information-signal domain cooperative sensing (HISDCS) framework is proposed in this paper in pursuit of an efficient tradeoff between  IDCS and  SDCS. In the proposed framework, carefully-quantized echo signals and some extracted location-related  information from different receivers are fused in the FC via a limited-backhaul wireless multiple access channel (MAC). It is worth noting that  by employing a subset of reliable nodes for localization, not only can the interference from unreliable nodes be reduced, but communication overhead can also be saved. Besides, we employ the CRLB  as the sensing performance metric and formulate an optimization problem to minimize the number of channel uses by optimizing the backhaul resource (i.e., quantization bits) allocation and cooperative node selection, under MAC capacity constraints. The main contributions of this work are summarized as follows:

\begin{enumerate}
\item{In the proposed HISDCS scheme, besides the estimated time delay and reflecting coefficient, we propose that  each sensing receiver also transmits the echo signal sampled around the estimated delay with a proper quantization strategy to the FC. A Karhunen-Loéve transformation (KLT) based encoding scheme  is proposed for efficient  echo signal quantization.}
\item{A novel matrix-inequality constrained successive convex approximation (MCSCA) algorithm is proposed for quantization bits allocation and a greedy strategy is designed for node selection. Besides, we  theoretically prove the convergence of the MCSCA algorithm given a feasible initial solution. Additionally, a low-complexity quantization bit reallocation algorithm is proposed, which retains most of the performance gain offered by HISDCS.}
\item{Simulations results are presented to show the superiority of the proposed HISDCS scheme over some baseline designs, and it is shown that the proposed HISDCS scheme is able to achieve an efficient tradeoff between localization performance and communication cost. }
\end{enumerate}

\begin{figure}[htbp]
\centerline{\includegraphics[width=0.85\linewidth]{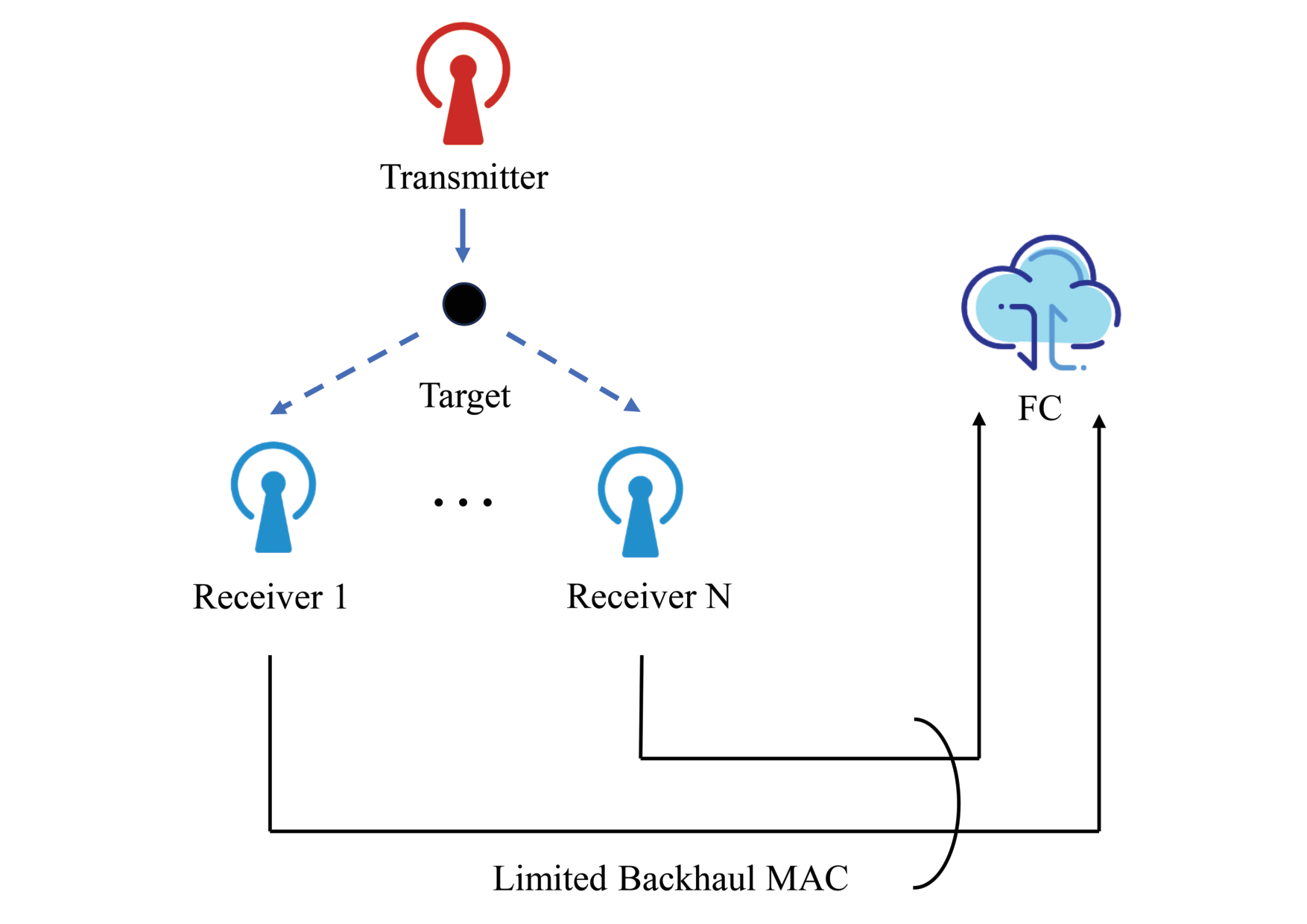}}
\caption{Cooperative sensing system model.}
\label{fig1}
\end{figure}

The rest of this paper is organized as follows. In Section II, we introduce the system model. Section III elaborates the HISDCS scheme. Section IV formulates an optimization problem and designs an efficient algorithm to solve the problem. Simulation results are provided in Section V and Section VI concludes the paper.

\emph{Notations}: Scalars, vectors and matrices are respectively denoted by lower/upper case, boldface lower case and boldface upper case letters. $\mathbb{E}(\mathbf{X})$ denotes the expectation of the random variable $\mathbf{X}$. $\Re(x)$ and $\Im(x)$ denote the real and imaginary parts of a complex number $x$, respectively. $\mathbf{A}^T$ and $\mathbf{A}^{-1}$ denote the transpose and inverse of matrix $\mathbf{A}$, respectively. $\mathbf{I}_{n}$ denotes a $n\times n$  identity matrix. $\operatorname{diag}(\mathbf{a})$ denotes a diagonal matrix with the elements in $\mathbf{a}$ being its diagonal elements. $\|\mathbf{a}\|$ denotes the $l_2$-norm of vector $\mathbf{a}$, and $\mathbf{A}\succeq\mathbf{B}$ means that $\mathbf{A}-\mathbf{B}$ is a positive semi-definite matrix.

\section{System Model}
Consider a cooperative sensing system with $N+1$ base stations (BSs), where one BS acts as the sensing transmitter while the other $N$ BSs act as the sensing receivers, as shown in Fig. \ref{fig1}. The $N$ sensing receivers are linked to the FC via a backhaul-limited Gaussian MAC. The sensing signal is sent by the transmitter, reflected by the target and then received by the $N$ receivers. After certain signal processing and local location information extraction, the $N$ receivers then quantify these local information and forward it to the FC. Finally, the FC estimates the target location based on the quantized information transmitted by the receivers.

Assuming that the transmitter and receivers are located at known positions $(x^t,y^t)$ and $(x^r_n, y^r_n),n \in [1,N]$, respectively, while the target is located at an unknown position, defined as $\boldsymbol{\theta} \triangleq [x,y]^T$. Besides, let $\sqrt{E}s(t)$ denote the lowpass equivalent of the signal transmitted from the transmitter, where $E$ is the transmitted power and $s(t)$ is a power normalized waveform satisfying $\int_{T_c}\left | s(t) \right | ^2dt=1$ with $T_c$ being the signal duration time. Then, the echo signal received at the $n$-th receiver is
\begin{equation}
r_n(t)=\sqrt{E}\alpha_ns(t-\tau_n)+\omega_n(t),\label{eq1}
\end{equation}
where $\omega_n(t)\sim \mathcal{CN} (0,\sigma^2_n)$ is the complex Gaussian noise at the $n$-th receiver, $\alpha_n =\rho _n  \xi _n$ is the effective reflecting coefficient of the $n$-th receiver, $\rho_n$ is the corresponding path-loss coefficient and $\xi _n$ is the reflection coefficient, which is assumed to be a complex random variable with unknown (deterministic) amplitude and random phase uniformly distributed between 0 and $2\pi$.  We assume that  the noises $\omega_n(t)$'s at different sensing receivers are statistically independent, due to the spatial separation, varying environmental conditions, and differing hardware factors, etc. This assumption is made to simplify the analysis without loss of generality and is commonly used in similar signal processing frameworks \cite{zhang2023direct,fishler2006spatial}.  Besides,  the direct signal transmitted from the transmitter to the receivers is not considered in \eqref{eq1} because in many radar and communication systems \cite{zhou2024direct,yang2022multitarget,wang2018parameter}, this direct signal usually arrives first with a different amplitude or phase as compared to the echo signal, thus it can be easily separated or filtered out from the reflected echo signals using techniques such as temporal filtering \cite{zhou2024direct}, thanks to the globally known positions of both the transmitter and the receivers.  In addition, the time delay $\tau_n$ corresponding to the $n$-th receiver can be expressed as
\begin{equation}
\tau_n = \frac{\sqrt{(x^t-x)^2+(y^t-y)^2} +\sqrt{(x^r_n-x)^2+(y^r_n-y)^2} }{c}, \label{eq2}
\end{equation}
where $c$ denotes the speed of light. After receiving the echo signals, the $N$ sensing receivers first sample the signals as
\begin{equation}
r_n(kT_s)=\sqrt{E}\alpha_ns(kT_s-\tau_n)+\omega_n(kT_s),k \in [1,K],
\end{equation}
where $T_s$ is the sampling period. Then, the $N$ sensing receivers quantize and process the sampled signals and communicate with the FC for cooperative sensing. Besides, it is worth noting that employing imprecise measurements from unreliable nodes may lead to localization accuracy deterioration \cite{hadzic2011utility}. Therefore, limiting the degree of cooperation and only using the measurements from the most informative nodes are essential for cooperative sensing.

Note that one baseline design for the considered problem is based on the IDCS scheme, where the $n$-th receiver first estimates the corresponding time delay $\tau_n$ and effective reflecting coefficient $\alpha_n$ from the received signal samples based on the ML rule and then transmits these estimated parameters to the FC. Finally, the FC estimates the target’s location by using the received parameters from the $N$ receivers by using the ML rule, again. Another baseline is based on the SDCS scheme, where each echo signal sample is uniformly quantized and then transmitted to the FC for localization \cite{wang2021target}.

The focus of this work is to combine the benefits of the IDCS and SDCS schemes by designing  proper local processing and quantization schemes at each receiver and efficient cooperative node selection strategy to guarantee the estimation performance of $\boldsymbol{\theta}$ at the FC while conserving the communication overhead.

\section{ Hybrid Information-Signal Domain Cooperative Sensing }

In order to enhance the sensing performance of the IDCS scheme while efficiently reduce the communication overhead of the SDCS scheme, a HISDCS scheme is proposed in this section by taking the limited MAC capacity into consideration. In the proposed design, each receiver first transmits the estimated time delay $\hat \tau_n$ and effective reflecting coefficient $\hat{\alpha}_n$ to the FC. Then, based on these estimated parameters, the FC solves a CRLB and MAC capacity constrained channel use number minimization problem to optimize the quantization bit allocation and node selection. Next, the FC sends the optimization results back to the $N$ receivers through dedicated links. The selected receivers sample their received sensing signals around the estimated time delay and employ the KLT encoding scheme \cite{goyal2001theoretical} to quantize these samples under the given quantization bit allocation. Finally, the quantized samples are transmitted to the FC for sensing performance improvement. The main structure of the proposed scheme is provided in Fig. \ref{fig2} and the key procedures mentioned above are detailed as follows.

\begin{figure}[htbp]
\centerline{\includegraphics[width=1\linewidth]{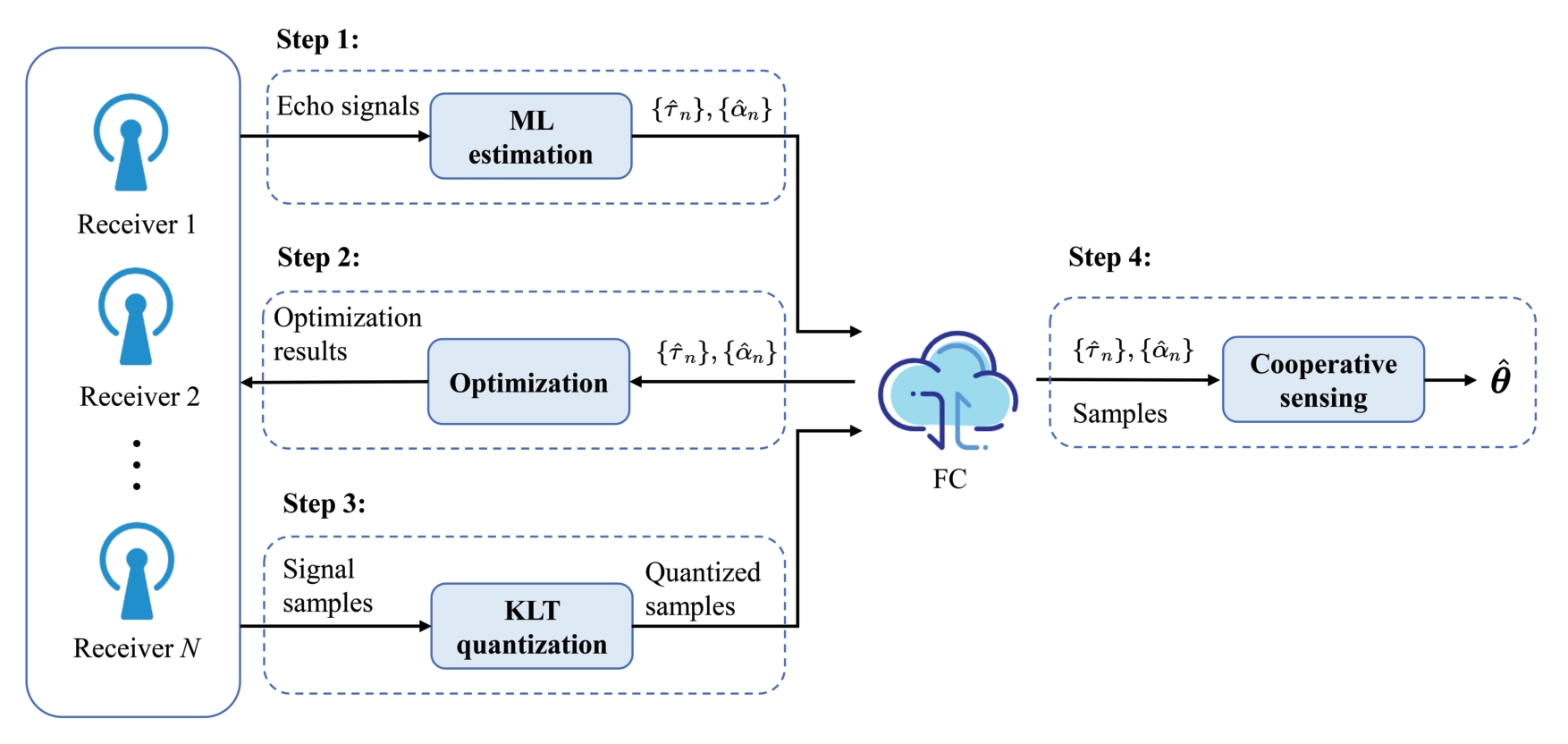}}
\caption{Illustration of the proposed design.}
\label{fig2}
\end{figure}

\subsection{ML Estimation of Individual Delay and Effective  Reflecting Coefficient}

Define $\mathbf{r} _n = \left \{ r_n(kT_s)|k=[1,K] \right \} $ as the  vector containing all the echo signals received by the $n$-th receiver, then the  probability density function (pdf) of $\mathbf{r} _n$ against $\tau_n$ and $\alpha_n$ in the log domain is given as
\begin{equation}
\begin{aligned}
&\ln p(\mathbf{r}_n|\tau_n,\alpha_n)=\\
&-\frac{1}{\sigma ^2_n} \sum_{k=1}^{K} \left | r_n(kT_s)-\sqrt{E}\alpha_ns(kT_s-\tau_n) \right |^2+D_0 ,
\label{eq4}
\end{aligned}
\end{equation}
where $D_0$ is a constant independent of $\tau_n$ and $\alpha_n$. Thus, the ML estimator for $\tau_n$ and $\alpha_n$ can be expressed as
\begin{equation}
[\hat{\tau}_n,\hat{\alpha}_n]=\arg \max_{\tau_n,\alpha _n}\ln p(\mathbf{r}_n|\tau_n, \alpha_n) .
\label{eq5}
\end{equation}

Note that directly solving \eqref{eq5} generally requires a two-dimensional search, which incurs high computational complexity. Thus, we propose to first obtain  a closed-form solution of  $\hat{\alpha}_n$ as a function of $\tau_n$, given by

\begin{equation}
\hat{\alpha}_n=\frac{ {\textstyle \sum_{k=1}^{K}}r_n(kT_s)s(kT_s-\tau_n) }{\sqrt{E}  {\textstyle \sum_{k=1}^{K}} \left | s(kT_s-\tau_n) \right |^2 } .
\label{eq6}
\end{equation}
Then, by substituting \eqref{eq6} back into \eqref{eq5}, the ML estimator for the time delay $\hat{\tau}_n$ can be obtained as
\begin{equation}
\hat{\tau}_n = \arg \max_{\tau_n} \left \{ -\frac{1}{\sigma^2_n}\sum_{k=1}^{K}\left | r_n\left ( kT_s \right )-\sqrt{E}\hat{\alpha}_ns\left ( kT_s-\tau_n \right )    \right | ^2   \right \} .
\label{eq7}
\end{equation}
Finally, once $\hat{\tau}_n$ is obtained from \eqref{eq7}, it can be substituted back into \eqref{eq6} to compute the corresponding estimate of $\hat{\alpha}_n$. The main procedure of the proposed ML algorithm is provided in Fig. \ref{fig_ML}:

\begin{figure}[htbp]
\centerline{\includegraphics[width=1.05\linewidth]{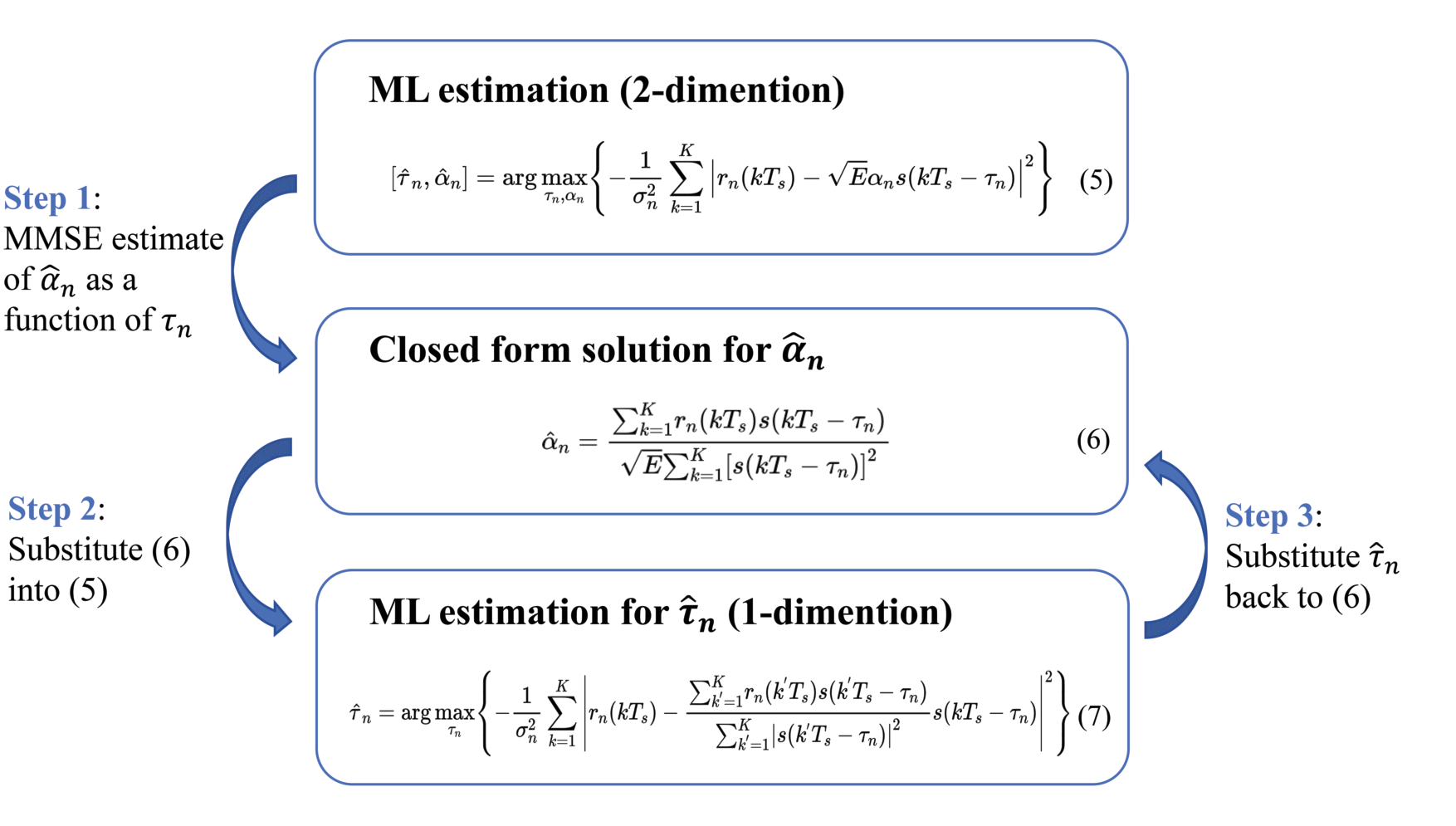}}
\caption{Procedure of the proposed ML algorithm.}
\label{fig_ML}
\end{figure}

The abovementioned approach ensures optimal estimation performance because the solution in equation \eqref{eq6} provides the minimum mean squared error (MMSE) estimate of $\alpha_n$ with a fixed $\tau_n$, and the maximum likelihood estimation method used in \eqref{eq5} is optimal when there is no prior information about the target. Additionally, this approach simplifies the estimation process by reducing the original two-dimensional search problem to a one-dimensional search problem over $\tau_n$.

Since the CRLB offers a theoretical and achievable lower bound of any unbiased estimator, and similar to the model  in \cite{bliss2014cooperative}, we assume that the estimated $\hat{\tau}_n$ and $\hat\alpha_n$ are statistically independent and related to the true parameters $\tau_n$ and $\alpha_n$, respectively,  as follows: 
\begin{equation}
    \hat{\tau}_n = \tau_n + \omega_{\tau_n},
    \label{eq8}
\end{equation}
\begin{equation}
    \hat\alpha_n=\alpha_n+\omega_{\alpha_n},
\end{equation}
where $\omega_{\tau_n}\sim \mathcal{N} \left ( 0,\text{CRLB}_{\hat{\tau}_n} \right ) $ is a Gaussian noise and its variance $\text{CRLB}_{\hat{\tau}_n}$ is the CRLB of  $\tau_n$ shown as follows\cite{li2023cooperative}:
\begin{equation}
    \text{CRLB}_{\hat{\tau}_n}=\frac{1}{\frac{2E}{\sigma^2_n} \left | \hat{\alpha}_n \right | ^2 \sum_{k=1}^{K} \left | {\frac{\partial s(t)}{\partial t}}\mid _{t=kT_s-\hat{\tau}_n}\right |^2}.
\end{equation}
Similarly,  $\omega_{\alpha_n}\sim \mathcal{CN} \left ( 0,\text{CRLB}_{\hat{\alpha}_n} \right ) $ is a complex Gaussian noise with variance $\text{CRLB}_{\hat{\alpha}_n}$, given by
\begin{equation}
    \text{CRLB}_{\hat{\alpha}_n}=\frac{1}{\frac{2E}{\sigma^2_n} \sum_{k=1}^{K} \left | s(kT_s-\hat{\tau}_n)\right |^2} ,
\end{equation}

\subsection{Received Signal Quantization at Each Receiver}

In this subsection, aiming to reduce the communication overhead (caused by sending all echo signal samples to the FC) and obtain high cooperation gain, we propose to send the received signals sampled around $\hat{\tau}_n$ to the FC. By transmitting the signal samples, we are able to retain the detailed time-domain information that is critical for more precise localization. With the additional time samples as in the proposed method, we are supposed to achieve better resolution and accuracy in localization. Suppose that each receiver has obtained the quantization bit allocation scheme from the FC (the details are provided in Section IV), the KLT based encoding scheme is thus employed to quantize these samples, which is shown to maximize the coding gain for Gaussian sources \cite{kay1993fundamentals}.

First, we introduce a sample index set at the $n$-th receiver, denoted by $\mathcal{K}_n=\left \{ k_n \in \mathcal{Z} \mid \hat{\tau}_n-\frac{T_d}{2} \le k_nT_s \le \hat{\tau}_n+\frac{T_d}{2}  \right \}  $, where $T_d$ is the sampling interval length and the size of $\mathcal{K}_n$ is $K_n$. Then, we can obtain a $2K_n \times 1$ vector which collects the real and imaginary parts of the signal samples at the $n$-th receiver, i.e.,
\begin{equation}
    \mathbf{r}_n =\left [ \left \{ \Re(r_n(k_nT_s)),k_n \in \mathcal{K}_n \right \} , \left \{ \Im(r_n(k_nT_s)),k_n \in \mathcal{K}_n \right \}\right ] ^T.
\end{equation}

Second, in order to efficiently quantize the sampled signals, it is crucial to ultilize the statistical information contained in $\mathbf{r}_n$. However, it is very difficult to directly analyze the exact distribution of  $\mathbf{r}_n$. To make the analysis tractable, we propose to regard $r_n(k_nT_s)$ as a binary function of $\tau_n$ and $\alpha_n$, denoted as $r_n(k_nT_s)=g_n(\tau_n,\alpha_n)$. Then, by considering its first-order Taylor expansion around $\hat{\tau}_n$ and $\hat\alpha_n$, we can approximate $r_n(k_nT_s)$ as follows:
\begin{equation}
\begin{aligned}
    r_n(k_nT_s)&=g_n(\tau_n,\alpha_n)\\
&\approx g_n(\hat{\tau}_n,\hat\alpha_n)+(\tau_n-\hat\tau_n)\cdot\frac{\partial g_n(\hat\tau_n,\hat\alpha_n)}{\partial \tau_n}\\
&\quad+(\alpha_n-\hat\alpha_n)\cdot\frac{\partial g_n(\hat\tau_n,\hat\alpha_n)}{\partial \alpha_n}\\
&=\sqrt{E}\omega_{\tau_n} \hat\alpha_n \frac{\partial s(t)}{\partial t} \mid _{t=k_nT_s-\hat{\tau}_n}\\
&\quad+\sqrt{E}(\hat\alpha_n-\omega_{\alpha_n})s(k_nT_s-\hat{\tau}_n) +\omega_n(k_nT_s).
\end{aligned}
\label{eq11}
\end{equation}
Under the approximation in \eqref{eq11},  the approximate value of $\mathbf{r}_n$ is obtained as
\begin{equation}
\begin{aligned}
 {\mathbf{r}}_n &\approx \sqrt{E}\omega_{\tau_n}\mathbf{p}_{\mathbf{r}_n}-\sqrt{E}\Re(\omega_{\alpha_n})\mathbf{q}_{\mathbf{r}_n1}-\sqrt{E}\Im(\omega_{\alpha_n})\mathbf{q}_{\mathbf{r}_n2}\\
 &\quad +\sqrt{E}\mathbf{h}_{\mathbf{r}_n}+\mathbf{w}_{\mathbf{r}_n},   
\end{aligned}
\end{equation}
where 
\begin{equation}
\begin{aligned}
    \mathbf{p}_{\mathbf{r}_n} =[\{ \Re(\hat\alpha_n\frac{\partial s(t)}{\partial t} \mid_{t=k_nT_s-\hat\tau_n}),k_n \in \mathcal{K}_n \},\\
    \{ \Im(\hat\alpha_n\frac{\partial s(t)}{\partial t} \mid_{t=k_nT_s-\hat\tau_n}),k_n \in \mathcal{K}_n \}]^T,
\end{aligned} 
\end{equation}
\begin{equation}
\begin{aligned}
    \mathbf{q}_{\mathbf{r}_n1} =[\{ \Re(s(k_nT_s-\hat\tau_n)),k_n \in \mathcal{K}_n \},\\
    \{ \Im( s(k_nT_s-\hat\tau_n)),k_n \in \mathcal{K}_n \}]^T,
\end{aligned}
\end{equation}
\begin{equation}
\begin{aligned}
    \mathbf{q}_{\mathbf{r}_n2} =[\{-\Im(s(k_nT_s-\hat\tau_n)),k_n \in \mathcal{K}_n \},\\
    \{ \Re( s(k_nT_s-\hat\tau_n)),k_n \in \mathcal{K}_n \}]^T,
\end{aligned}
\end{equation}
\begin{equation}
\begin{aligned}
    \mathbf{h}_{\mathbf{r}_n} =[\{ \Re(\hat\alpha_ns(k_nT_s-\hat\tau_n)),k_n \in \mathcal{K}_n \},\\
    \{ \Im(\hat\alpha_n s(k_nT_s-\hat\tau_n)),k_n \in \mathcal{K}_n \}]^T,
\end{aligned}
\end{equation}
\begin{equation}
\begin{aligned}
      \mathbf{w}_{\mathbf{r}_n} =[\{ \Re(\omega_n(k_nT_s),k_n \in \mathcal{K}_n \},\\
      \{ \Im(\omega_n(k_nT_s)),k_n \in \mathcal{K}_n \}]^T.
\end{aligned}
\end{equation}
Since we assume that $\omega_{\alpha_n} \sim \mathcal{CN}(0,\textrm{CRLB}_{\hat\alpha_n})$ is complex Gaussian distributed, its real and imaginary parts are independent and each follows Gaussian distribution $\mathcal{N}(0,\frac{1}{2}\textrm{CRLB}_{\hat\alpha_n})$. Then, it can be inferred that $\mathbf{r}_n$ also obeys Gaussian distribution since $\omega_n(k_nT_s)$, $\omega_{\tau_n}$, $\Re(\omega_{\alpha_n})$ and $\Im(\omega_{\alpha_n})$ are Gaussian distributed. Hence, the mean of $\mathbf{r}_n$ can be easily obtained as
\begin{equation}
     \bar{\mathbf{r}}_n = \sqrt{E}\mathbf{h}_{\mathbf{r}_n},
\end{equation}
and  the  covariance matrix $\mathbf{Q}_{\mathbf{r}_n}$ of $\mathbf{r}_n$ is given by
\begin{equation}
\begin{aligned}
\mathbf{Q}_{\mathbf{r}_n}&=\mathbb{E}[(\mathbf{r}_n-\bar{\mathbf{r}}_n)(\mathbf{r}_n-\bar{\mathbf{r}}_n)^T]\\ 
&=\mathbb{E}(\mathbf{r}_n\mathbf{r}_n^T)-\bar{\mathbf{r}}_n\bar{\mathbf{r}}_n^T\\ 
&=E\cdot\mathbf{p}_{\mathbf{r}_n}\mathbf{p}_{\mathbf{r}_n}^T\mathbb{E}(\omega_{\tau_n}^2)+ E\cdot\mathbf{q}_{\mathbf{r}_n1}\mathbf{q}_{\mathbf{r}_n1}^T\mathbb{E}(\Re(\omega_{\alpha_n})^2)\\ 
&\quad + E\cdot\mathbf{q}_{\mathbf{r}_n2}\mathbf{q}_{\mathbf{r}_n2}^T\mathbb{E}(\Im(\omega_{\alpha_n})^2)\\
&\quad +E\cdot\mathbf{h}_{\mathbf{r}_n}\mathbf{h}_{\mathbf{r}_n}^T + \mathbb{E}(\mathbf{w}_{\mathbf{r}_n}\mathbf{w}_{\mathbf{r}_n}^T)-\bar{\mathbf{r}}_n\bar{\mathbf{r}}_n^T\\ 
&=E\cdot \mathrm{CRLB}_{\hat\tau_n}\mathbf{p}_{\mathbf{r}_n}\mathbf{p}_{\mathbf{r}_n}^T + E\cdot \frac{1}{2}\mathrm{CRLB}_{\hat\alpha_n}\mathbf{q}_{\mathbf{r}_n1}\mathbf{q}_{\mathbf{r}_n1}^T\\
&\quad +E\cdot \frac{1}{2}\mathrm{CRLB}_{\hat\alpha_n}\mathbf{q}_{\mathbf{r}_n2}\mathbf{q}_{\mathbf{r}_n2}^T  + \frac{1}{2} \sigma^2_n\mathbf{I}_{2K_n}. 
\end{aligned}
\label{eq17}
\end{equation}%

Finally, based on the above approximation and statistical knowledge of $\mathbf{r}_n$, we can apply the KLT based encoding scheme to quantize $\mathbf{r}_n$ via the following two steps.

In the first step, we obtain $\mathbf{Q}_{\mathbf{r}_n}=\mathbf{U}_n\mathbf{\Lambda}_n \mathbf{U}_n^T$ via eigenvalue decomposition, where the diagonal matrix $\mathbf{\Lambda}_n=\textrm{diag}(\gamma_{n1},\gamma_{n2},\cdots,\gamma_{n(2K_n)})$ collects all the eigenvalues of $\mathbf{Q}_{\mathbf{r}_n}$. Then, by transforming $\mathbf{r}_n$ using KLT, we can obtain the following transformed vector: 
\begin{equation}
    \mathbf{r}_{nC}= \mathbf{U}_n^T\mathbf{r}_n.
    \label{eq13}
\end{equation}
Since the covariance matrix of the transformed vector $\mathbf{r}_{nC}$ is the diagonal matrix $\mathbf{\Lambda}_n$, all the elements in $\mathbf{r}_{nC}$ are independent of each other and satisfy
\begin{equation}
    \left [ \mathbf{r}_{nC}  \right ] _j \sim \mathcal{N} \left ( \left [ \mathbf{U}_n^T \bar{\mathbf{r}}_n \right ] _j ,\gamma_{nj} \right ), j\in[1,2K_n].
    \label{eq14}
\end{equation}

In the second step, a Lloyd quantizer \cite{cover1999elements} is applied to quantize $\left [ \mathbf{r}_{nC}  \right ] _j,j\in [1,2K_n]$. Since the locations of the $N$ receivers are different and the impacts of their received signals on sensing performance are distinct, $\left [ \mathbf{r}_{nC}  \right ] _j,j\in [1,2K_n]$ can be quantized using different numbers of quantization bits.  Nevertheless, the total number of available quantization bits is constrained by the MAC capacity. 

\subsection{Target Localization at the FC}

Suppose that the FC can obtain the estimated time delays $\{\hat{\tau}_n\}$ and effective reflecting coefficients $\{\hat{\alpha}_n\}$ (information-domain) at the receivers with negligible quantization error, then the eigenmatrices  $\{\mathbf{U}_n\}$ can be acquired with negligible quantization loss, which implicitly contains the information-domain signals $\{\hat{\tau}_n, \hat{\alpha}_n\}$.  After receiving the quantized signal samples $\tilde{\mathbf{r}}_{nC}$ (signal-domain) from the $N$ receivers, the FC is able to recover the target location by fusing these signal-domain and information-domain measurements, i.e., $\{\hat{\tau}_n, \hat{\alpha}_n\}$ and  $\{\left [ \tilde{\mathbf{r}}_{nC}  \right ] _j,j\in [1,2K_n]\}$.

Specifically, to evaluate the impact of signal quantization on the sensing performance, we introduce the Gaussian quantization error model \cite{park2014fronthaul}, based on which the $j$-th quantized signal at the $n$-th receiver, i.e.,  $[\tilde{\mathbf{r}}_{nC}]_j$, can be expressed as
\begin{equation}
    [\tilde{\mathbf{r}}_{nC}]_j=[\mathbf{r}_{nC}]_j+q_{nj},
\end{equation}
where $q_{nj}\sim \mathcal{N}(0,\eta_{nj}) $ is the additive Gaussian quantization error and $\eta_{nj} = \frac{\gamma_{nj}}{2^{2X_{nj}-1}}$ is the lower bound of the variance of $q_{nj}$ , which can be obtained by
\begin{equation}
    I ( \left [ \tilde {\mathbf{r} }_{nC} \right ]_j; \left [  {\mathbf{r} }_{nC} \right ]_j  )= \frac{1}{2}\textrm{log}_2 \left( \frac{\eta_{nj}+\gamma_{nj}}{\eta_{nj}} \right) \le X_{nj} ,
    \label{eq16}
\end{equation}
with $X_{nj}$ denoting the number of bits allocated for quantizing $[\mathbf{r}_{nC}]_j$ and $I ( \left [ \tilde {\mathbf{r} }_{nC} \right ]_j; \left [  {\mathbf{r} }_{nC} \right ]_j  ) $ denoting the mutual information between $[\tilde{\mathbf{r}}_{nC}]_j$ and $[\mathbf{r}_{nC}]_j$. According to the rate-distortion theorem \cite{cover1999elements},  the above quantization variance can be achieved by using the KLT based encoding scheme if \eqref{eq16} holds. 

Then, by performing inverse linear transformation on $\tilde{\mathbf{r}}_{nC}$, i.e., multiplying $\tilde{\mathbf{r}}_{nC}$ on the left by eigenmatrix   $\mathbf{U}_n$, we are able to recover the received signal samples by
\begin{equation}
    \tilde{\mathbf{r} }_n = \mathbf{U}_n\tilde{\mathbf{r} }_{nC}=\mathbf{r}_n+\mathbf{U}_n\mathbf{q}_n  ,
\end{equation}
where $\mathbf{U}_n\mathbf{q}_n \sim \mathcal{N}(\mathbf{0},\mathbf{Q}_n )  $ is the quantization error vector of $\mathbf{r}_n$ and $\mathbf{Q}_n$ is given by
\begin{equation}
    \mathbf{Q}_n = \mathbf{U}_n\textrm{diag}([\eta_{n1},\cdots,\eta_{n(2K_n)}]) \mathbf{U}_n^T.
    \label{eq18}
\end{equation}

Finally, the FC estimates the target location by resorting to the ML rule. Let  $\Omega \subseteq \left \{ 1,2,\cdots,N \right \} $ denote the selected receivers (the detailed selection strategy will be introduced in Section IV), then the received signal samples available at the FC can be represented as $\tilde{\mathbf{r} }_{\Omega}= \left \{ \tilde{\mathbf{r}}_n \mid n \in \Omega \right \} $. Since the noises $\omega_n(t)$  are uncorrelated across the sensing receivers, as noted in Section II. The ML estimation of $\boldsymbol \theta \triangleq[x,y]^T$ can be obtained by integrating both information-domain and signal-domains as follows:
\begin{equation}
    \hat{\boldsymbol \theta} = \arg \max_{\boldsymbol \theta}\ln p(\tilde{\mathbf{r}}_{\Omega} \mid \boldsymbol{\theta},\hat{\boldsymbol{\tau}},\hat{\boldsymbol{\alpha}}) ,
    \label{eq19}
\end{equation}
where
\begin{equation}
\begin{split}
    &\ln p(\tilde {\mathbf{r} }_{\Omega} \mid 
 \boldsymbol{\theta},\hat{\boldsymbol{\tau}},\hat{\boldsymbol{\alpha}}) = \\
    &-\frac{1}{2} \sum_{n \in \Omega}\left [ (\tilde{\mathbf{r} }_n-\mathbf{s}_n)^T \left ( \mathbf{Q}_{\omega_n  } +\mathbf{Q}_{n} \right) ^{-1}(\tilde{\mathbf{r} }_n-\mathbf{s}_n) \right] +D_2,
\end{split}
\label{eq20}
\end{equation}
and $D_2$ is a constant uncorrelated with $\boldsymbol{\theta}$, $\mathbf{Q}_{\omega_n} \triangleq\frac{1}{2}\sigma^2_n\mathbf{I}_{2K_n}$ is a diagonal matrix, $\hat{\boldsymbol{\tau}} \triangleq\{\hat\tau_n\mid n=1,\cdots,N\}$, $\hat{\boldsymbol{\alpha}} \triangleq\{\hat\alpha_n\mid n=1,\cdots,N\}$ and $\mathbf{s}_n  \triangleq  [ \{ \sqrt[]{E}\Re(\hat{\alpha}_ns(k_nT_s-\tau_n)) ,k_n \in \mathcal{K}_n  \} ,$ $  \{\sqrt[]{E}\Im(\hat{\alpha}_n s(k_nT_s-\tau_n)) ,k_n \in \mathcal{K}_n \}]^T$.

\begin{remark}
In \eqref{eq19} and \eqref{eq20}, it seems that only  signal-domain information $\tilde{\mathbf{r} }_{\Omega}$ is used for localization, however, it is important to mention that  the information-domain parameters also play a significant role in the overall localization process. Specifically, the term $\mathbf{Q}_n$ in \eqref{eq20} is derived from \eqref{eq18}, while the eigenmatrix $\mathbf{U}_n$ in \eqref{eq18} is obtained by performing an eigenvalue decomposition of the matrix $\mathbf{Q}_{\mathbf{r}_n}=\mathbf{U}_n\mathbf{\Lambda}_n \mathbf{U}_n^T$, which is computed based on the estimated time delays $\hat\tau_n$ and reflection coefficients $\hat\alpha_n$ as described in \eqref{eq17}. Besides, the term $\mathbf{s}_n$ in \eqref{eq20} also involves $\hat\alpha_n$ and $\hat\tau_n$, where $\hat \alpha_n$ affects the amplitude of $\mathbf{s}_n$ and $\hat\tau_n$ is implicitly  entailed in the set $\mathcal{K}_n $.
Thus, although the final position estimation relies on the quantized signals, the underlying information-domain parameters (time delays and reflection coefficients) are also involved in localization and they are crucial for shaping the quantized signals. Besides, we recognize that although we have integrated information-domain measurements into the signal-domain processing, the proposed method may not be able to perfectly exploit them. Investigating more efficient HISDCS schemes would be an interesting future research direction.
\end{remark}

\section{Problem Formulation and Algorithm Design}

In this section, we formulate an optimization problem to minimize the communication cost (i.e., the number of channel uses), under the CRLB constraint that ensures the sensing performance and the MAC capacity constraints between the sensing receivers and the FC. The number of quantization bits at each reciever and the set of selected receivers are jointly optimized by proposing a novel MCSCA algorithm and a greedy node selection strategy. Besides, we prove that the proposed MCSCA algorithm is guaranteed to converge to the set of Karush-Kuhn-Tucker (KKT) solutions, and a simple greedy bit reallocation algorithm is further designed for lower computational complexity.

\subsection{Optimization Problem Formulation}

In this work, we employ the well-known CRLB as the sensing performance metric, since it offers a theoretical lower bound on the best achievable estimation accuracy under given observations \cite{kay1993fundamentals}. Specifically, the CRLB of $\boldsymbol{\theta}$ at $\hat{\boldsymbol{\theta}}$ can be obtained as
\begin{equation}
    \textrm{CRLB}_{\hat{\boldsymbol\theta}}=\textrm{tr}[\mathbf{J}^{-1}(\hat{\boldsymbol\theta}) ],
    \label{eq21}
\end{equation}
where $\mathbf{J}(\boldsymbol{\theta})$ denotes the Fisher information matrix (FIM), which is given by
\begin{equation}
    \begin{aligned}
\mathbf{J}(\boldsymbol\theta) & =E_{\tilde{\mathbf{r}}_{\Omega} \mid \boldsymbol{ \theta}}\left\{-\frac{\partial^{2}}{\partial \boldsymbol\theta \partial \boldsymbol{\theta}^{T}} \ln p\left(\tilde{\mathbf{r}}_{\Omega} \mid \boldsymbol\theta\right)\right\} \\
& =\sum_{n \in \Omega} \frac{\partial \mathbf{s}_{n}}{\partial \boldsymbol\theta}\left(\mathbf{Q}_{\omega_{n}}+\mathbf{Q}_{n}\right)^{-1}\left(\frac{\partial \mathbf{s}_{n}}{\partial \boldsymbol\theta}\right)^{T}.
\end{aligned}
\label{eq22}
\end{equation}

Then, according to the Gaussian MAC capacity limits \cite{tse2005fundamentals}, the number of quantization bits $\{X_{nj}\}$  must satisfy the following conditions for each possible non-empty subset $\mathcal{S}$ of $\Omega$:
\begin{equation}
    \begin{aligned}
    \sum_{n \in \mathcal{S}} R_n= \sum_{n \in \mathcal{S}}\frac{\sum_{j=1}^{2 K_ n} X_{n j}}{W} \le \log \left(1+\frac{\sum_{n \in \mathcal{S}} P_{n} g_{n}}{N_{0}}\right), \\
    \textrm { for all } \mathcal{S} \subseteq \Omega,
    \end{aligned}
    \label{eq23}
\end{equation}
where $R_n=\frac{\sum_{j=1}^{2K_n}X_{nj}}{W}$ represents the transmission rate in the unit of bit per channel use (bpcu), $W$ is the number of channel uses \cite{tse2005fundamentals}, $N_0$ denotes the noise power, $P_n$ and $g_n$ represent  the transmit power at the $n$-th sensing receiver and the channel gain between the $n$-th receiver and the FC, respectively. Please refer to Appendix A for the derivation of \eqref{eq23}. Therefore, the considered optimization problem can be formulated as\footnote{Note that in this work, we assume that the $N$ receivers transmit signals using the same power, thus the transmit power $P_n$ is not an optimization variable.}
\begin{subequations}
    \begin{align} 
\mathop{\min}_{\Omega,{X_{nj}},W}  &W \label{eq24a}\\
 \textrm{s.t.} \quad & \eqref{eq23}, \notag \\
       & \text{CRLB}_{\hat{\boldsymbol\theta}} \le \epsilon, \label{eq24b}\\
       & W \in \mathcal{N}^+, \\
       &X_{nj} \in \mathcal{N},j \in [1,2K_n],n \in \Omega,
    \end{align}
    \label{eq24}%
\end{subequations}
where $W$ is minimized as the objective function to reduce the communication cost and constraint \eqref{eq24b}  is introduced to guarantee certain sensing performance. It is noteworthy that the feasibility of problem \eqref{eq24} can be ensured by setting a proper value of  $\epsilon$ that exceeds the minimum achievable CRLB (denoted by $\epsilon^*$), which can be easily obtained by letting $X_{nj}\rightarrow \infty$ and then calculating the corresponding CRLB through \eqref{eq21}.

Problem \eqref{eq24} is challenging to solve because 1) the optimization variables are discrete and intricately coupled in the constraints; 2) the CRLB constraint \eqref{eq24b} is highly non-convex which involves matrix inversion operation; and 3) selecting appropriate cooperative nodes is a combinational problem which is in general NP-hard. Generally, there is no efficient method for solving the non-convex problem \eqref{eq24} optimally. 

\subsection{Algorithm Design}

To address the above challenges, we propose in this work to decouple the optimization of the quantization bit numbers at each receiver $\{X_{nj}\}$, the required channel use number $W$, and the selected node set $\Omega$. Specifically,  we first present an efficient MCSCA algorithm to optimize $\{X_{nj}\}$ and $W$ with given $\Omega$, and then a greedy node selection strategy is presented to optimize $\Omega$ based on the MCSCA algorithm, which is able to find a high-quality solution of problem \eqref{eq24} effectively.

\subsubsection{MCSCA Algorithm for Quantization Bit Allocation}

With  given node selection, problem \eqref{eq24} is still difficult to solve due to the discrete variables $\{X_{nj}\}$ and $W$, and non-convex CRLB constraint. To tackle these challenges, we first slack the discrete variables $\{X_{nj}\}$ and $W$ into continuous ones. Then, in order to resolve the difficulty caused by the matrix inversion operation in CRLB,  we introduce a $2 \times 2$ auxiliary matrix $\mathbf{M}$ which satisfies
\begin{equation}
    \mathbf{M} \succeq \mathbf{J}^{-1}(\hat{\boldsymbol\theta}).
    \label{eq25}
\end{equation}
Since $ \mathbf{J}^{-1}(\hat{\boldsymbol\theta}) \succeq \mathbf{0}$, \eqref{eq25} can be equivalently transformed  into the following positive semidefinite constraint:
\begin{equation}
    \mathbf{G}=\begin{bmatrix}
 \mathbf{M}  & \mathbf{I}_2\\
 \mathbf{I}_2& \mathbf{J}(\hat{\boldsymbol\theta}) 
\end{bmatrix}
\succeq \mathbf{0},
\label{eq26}
\end{equation}
where $\mathbf{I}_2$ is a $2 \times 2$ identity matrix. Thus, according to above transformations, the quantization bit allocation sub-problem under given $\Omega$ can be obtained by
\begin{subequations}
    \begin{align} 
   \min_{\{X_{nj}\},W,\mathbf{M}}& W\\
 \textrm{s.t.}\quad  & \eqref{eq23}, \eqref{eq26}\notag\\
       & \textrm{tr}(\mathbf{M} ) \le \epsilon,\label{eq27b} \\
       & \{W ,X_{nj}\} \in \mathcal{R}^+,j\in[1,2K_n],n \in \Omega.\label{eq27c}
    \end{align}
    \label{eq27}
\end{subequations}

Next, since the eigenmatrix $\mathbf{U}_n$ is an orthogonal matrix, $\mathbf{J}(\hat{\boldsymbol{\theta}})$ can be equivalently rewritten as
\begin{equation}
\begin{aligned}
    &\mathbf{J}(\hat{\boldsymbol{\theta}})=\\
    &\sum_{n \in \Omega} \frac{\partial \mathbf{s}_{n}}{\partial \hat{\boldsymbol{\theta}}} \mathbf{U}_n \operatorname{diag}
    \left([y_n(X_{n1}),\cdots,y_n(X_{n(2K_n)})]\right)
    (\frac{\partial \mathbf{s}_{n}}{\partial \hat{\boldsymbol{\theta}}} \mathbf{U}_n)^{T}, 
\end{aligned}  
\label{eq28}
\end{equation}
where $y_n(X_{nj})$ is given by
\begin{equation}
    y_n(X_{nj})=\frac{2^{2X_{nj}-1}}{\gamma_{nj}+ 2^{2X_{nj}-2} \sigma^2_n} ,n \in \Omega,j \in[1,2K_n].
    \label{eq29}
\end{equation}
As can be seen, since $y_n(X_{nj})$ in \eqref{eq29} is a non-convex function with respect to $X_{nj}$, constraint \eqref{eq26} is a non-convex constraint and difficult to handle. Besides, since \eqref{eq26} is a matrix-inequality constraint, the existing successive convex approximation (SCA) algorithm \cite{razaviyayn2014successive} cannot be directly applied. Therefore, we propose the MCSCA algorithm to tackle the problem \eqref{eq27} effectively, where $\{X_{nj}\}$ are iteratively updated by solving a sequence of convex optimization problems.

Specifically, at the $t$-th iteration, a surrogate function $u^t_n(X_{nj})$ is constructed for each $y_n(X_{nj})$, which can be viewed as the tangent function of $y_n(X_{nj})$ at point $X^t_{nj}$ and is given by
\begin{equation}
    u^t_n(X_{nj})= (X_{nj}-X^t_{nj})\nabla y_n(X^t_{nj})+y_n(X^t_{nj}),
    \label{eq30}
\end{equation}
where 
\begin{equation}
    \begin{aligned}
        \nabla y_n(X^t_{nj})=\frac{\ln 2 \cdot 2^{2X^t_{nj}} \gamma_{nj}}{(\gamma_{nj}+ 2^{2X^t_{nj}-2})^2\sigma^2_n },\\
        y_n(X^t_{nj})=\frac{2^{2X_{nj}^t-1}}{\gamma_{nj}+ 2^{2X_{nj}^t-2} \sigma^2_n}.
    \end{aligned}
\end{equation}
Then, we can obtain a surrogate FIM  $\bar{\mathbf{J}}^t(\hat{\boldsymbol{\theta}})$ by substituting \eqref{eq30} into \eqref{eq28} and a surrogate matrix $\bar{\mathbf{G}}^t$ by replacing $\bar{\mathbf{J}}(\hat{\boldsymbol{\theta}})$ in \eqref{eq26} with $\bar{\mathbf{J}}^t(\hat{\boldsymbol{\theta}})$. Therefore, by employing a small positive number $\mu>0$, we have the following strongly convex problem:
\begin{subequations}
    \begin{align} 
   \mathop{\min}_{\mathbf{x},\{X_{nj}\},W,\mathbf{M}} & W+\mu\| \mathbf{x} - \mathbf{x}^t\|^2 \\
 \textrm{s.t.}\quad  &  \eqref{eq27c} ,\notag\\
       &  \textrm{tr}(\mathbf{M} )+\mu\|\mathbf{x} - \mathbf{x}^t\|^2 \le \epsilon,\\
       &  \bar{\mathbf{G}}^t=\begin{bmatrix}
 \mathbf{M}  & \mathbf{I}_2\\
 \mathbf{I}_2& \bar{\mathbf{J}}^t(\hat{\boldsymbol{\theta}}) 
\end{bmatrix}\succeq \mu\| \mathbf{x} - \mathbf{x}^t\|^2 \mathbf{I}_4,\label{32b}\\
       &  \sum_{n \in \mathcal{S}} \frac{\sum_{j=1}^{2 K_ n} X_{n j}}{W}+\mu\| \mathbf{x} - \mathbf{x}^t\|^2 \le  \\
       &\log \left(1+\frac{\sum_{n \in \mathcal{S}} P_{n} g_{n}}{N_{0}}\right),\textrm { for all } \mathcal{S} \subseteq \Omega \notag,\\
       & \left \| [X_{nj}-X^t_{nj}] \right \| ^2 \le (\beta^t)^2,\label{eq32c}
\end{align}
    \label{eq32}%
\end{subequations}
where $\mathbf{x}\triangleq[\{X_{nj}\},W,\text{vec}(\mathbf{M})^T]^T$ and the term $\mu\| \mathbf{x} - \mathbf{x}^t\|^2$ is added to ensure the strong convexity of the surrogate functions. Besides, $\{\beta^t\}$ in \eqref{eq32c} is a decreasing sequence satisfying $\beta^t \to 0$ and $\sum_t \beta^t = \infty$, which is introduced to gradually decrease the variable updating speed and ensure that the algorithm finally converges. Based on the above approximation, $\mathbf{x}^{t+1}$, $\{{X}_{nj}^{t+1}\}$ and $ W^{t+1}$ are obtained by solving \eqref{eq32} via existing software solvers, such as CVX \cite{guimaraes2015tutorial}.

The main steps of the proposed MCSCA algorithm are summarized in Algorithm 1, and it can be proved that Algorithm 1 is guaranteed to converge to the set of KKT solutions of problem \eqref{eq27}. The details will be presented in the following.

\makeatletter
%%\newcommand{\removelatexerror}{\let\@latex@error\@gobble}
%%\makeatother
\floatname{algorithm}{Algorithm}

  \renewcommand{\algorithmicrequire}{\textbf{Input:}}
  \renewcommand{\algorithmicensure}{\textbf{Output:}}
%  \removelatexerror
  
  \begin{algorithm}[H]
    \caption{Proposed MCSCA Algorithm for Quantization Bits Allocation}
    \begin{algorithmic}[1]
      \REQUIRE  $\Omega$,  $\{\hat{\tau}_n\}$, $\{\hat\alpha_n\}$, $\{\sigma^2_n\}$, $\{P_n\}$, $\{g_n\}$, $N_0$ and $\{\beta^t\}$. 
      \ENSURE  $\{\bar{X}_{nj}\}$ and $\bar{W}$.
      \STATE \textbf{Initialize} $\{X^0_{nj}\}$, $t=0$.
      \REPEAT
      \STATE Update the surrogate functions $u^t_n(X_{nj})$ according to \eqref{eq29};
      \STATE Solve problem \eqref{eq32} to obtain $\{{X}^{t+1}_{nj}\}$ and $ W^{t+1}$;
      \STATE $t=t+1$;
      \UNTIL some convergence criterion is met
    \end{algorithmic}
  \end{algorithm}

\begin{remark}

It is noteworthy that to make the overall problem tractable, we have ignored the discrete constraints and relaxed $\{X_{nj}\}$ as continuous variables in problem \eqref{eq27}. After obtaining the optimized quantization bit allocation, we can simply round up the continuous bit numbers to obtain a discrete solution, i.e.,
\begin{equation}
    X^*_{nj}=\left \lceil \bar X_{nj} \right \rceil , n \in \Omega,j\in[1,2K_n].
\end{equation}
and the final channel use number $W^*$ is set to be the minimum integer that satisfies the constraint \eqref{eq23}.
\end{remark}

\subsubsection{Convergence Analysis for the MCSCA Algorithm}
In order to facilitate our convergence analysis of the MCSCA algorithm, we propose to properly modify the problem formulation and algorithm design, which are detailed as follows.

First, the positive semidefinite constraint \eqref{eq26} is equivalent to the following constraint: 
\begin{equation}
    \lambda _{\text{min}}(\mathbf{G})\ge 0,
    \label{eq35}
\end{equation}
where  $\lambda _{\text{min}}(\mathbf{G})$ denotes the minimum eigenvalue of $\mathbf{G}$. Since the smallest eigenvalue of a symmetric matrix is equal to its minimum Rayleigh quotient, for $\forall \gamma \in [0,1]$, we have
\begin{equation}
    \begin{aligned}
&\lambda_{min}(\gamma \mathbf{G}+(1-\gamma)\mathbf{G}')\\
&=\min_{\boldsymbol{v}} \frac{\boldsymbol{v}^T(\gamma \mathbf{G}+(1-\gamma)\mathbf{G}')\boldsymbol{v}
}{\boldsymbol{v}^T\boldsymbol{v}}\\
&=\min_{\boldsymbol{v}}\left \{ \gamma\frac{\boldsymbol{v}^T\mathbf{G}\boldsymbol{v}}{\boldsymbol{v}^T\boldsymbol{v}} + (1-\gamma)\frac{\boldsymbol{v}^T\mathbf{G}'\boldsymbol{v}}{\boldsymbol{v}^T\boldsymbol{v}} \right \}\\
& \ge \gamma \min_{\boldsymbol{v}_1} \frac{\boldsymbol{v}_1^T\mathbf{G}\boldsymbol{v}_1}{\boldsymbol{v}^T_1\boldsymbol{v}_1}+(1-\gamma)\min_{\boldsymbol{v}_2} \frac{\boldsymbol{v}_2^T\mathbf{G}'\boldsymbol{v}_2}{\boldsymbol{v}^T_2\boldsymbol{v}_2}\\
&=\gamma \lambda_{min}(\mathbf{G})+(1-\gamma) \lambda_{min}(\mathbf{G}').
\end{aligned}
\end{equation}
Hence, $\lambda _{min}(\mathbf{G})$ is a concave function of $\mathbf{G}$, and problem \eqref{eq27} can be equivalently reformulated as follows:
\begin{subequations}
    \begin{align}
 \min_{\boldsymbol{x} \in\mathcal{X} } \quad &f_0(\boldsymbol{x})\\
 \textrm{s.t.} \quad &f_i(\boldsymbol{x})\le 0, i=1,\cdots,m,\label{eq37b}
    \end{align}
\label{eq37}%
\end{subequations}
where $\mathcal{X}$ is the domain of $\boldsymbol{x}\triangleq[\{X_{nj}\},W,\text{vec}(\mathbf{M})^T]^T$, $f_0(\boldsymbol{x})\triangleq W$ and $f_1(\boldsymbol{x}) \triangleq -\lambda _{min}(\mathbf{G})$. The rest constraints in \eqref{eq37b}, i.e., $f_i(\boldsymbol{x}) \le 0 ,i\in[2,m]$, can be easily obtained from \eqref{eq23} and \eqref{eq27b}, the details are omitted here for brevity.

Next, according to the structure of the MCSCA algorithm, in each iteration $t$, we replace the objective function and constraints $f_i(\boldsymbol{x}),i\in[0,m]$ by the following strongly convex surrogate functions $\bar f_i^t(\boldsymbol{x}),i\in\{0,\cdots,m\}$:
\begin{equation}
    \bar f_i^t(\boldsymbol{x})=
\begin{cases}-\lambda _{min}(\bar{\mathbf{G}}^t)+\mu\| \boldsymbol{x} - \boldsymbol{x}^t\|^2,&i=1,
 \\f_i(\boldsymbol{x})+\mu\| \boldsymbol{x} - \boldsymbol{x}^t\|^2,&i\in\{0\}\cup\{2,\cdots,m\},
\end{cases}
\label{eq38}
\end{equation}
where $\mu>0$ is a small positive number. Note that in \eqref{eq38} the convexity of $-\lambda _{min}(\bar{\mathbf{G}}^t)$ and $f_i(\boldsymbol{x}),i\in\{0\} \cup[2,m]$ can be easily established, and the term $\mu\| \boldsymbol{x} - \boldsymbol{x}^t\|^2$ is added to ensure the strong convexity of  $\bar f_i^t(\boldsymbol{x}),i\in[0,m]$.

Accordingly we solve the following surrogate convex problem the $t$-th iteration :
 \begin{subequations}
        \begin{align}
\boldsymbol{x}^{t+1}=\arg \min_{\boldsymbol{x} \in\mathcal{X} } \quad &\bar f_0^t(\boldsymbol{x})\\
\textrm{s.t.} \quad & \bar f_i^t(\boldsymbol{x})\le 0, i=1,\cdots,m,\\
& \left \| \boldsymbol{x}-\boldsymbol{x}^t \right \| ^2 \le (\beta^t)^2.\label{eq39c}
\end{align}
\label{eq39}%
  \end{subequations}

Note that although only the elements $\{X_{nj}\}$ in $\boldsymbol{x}$ are required to satisfy the constraint \eqref{eq39c}, we impose this constraint on all the optimization variables in $\boldsymbol{x}$, which will simplify the convergence proof, but will not affect the performance of the proposed MCSCA algorithm (as validated via numerical simulations). 

Then, we introduce some lemmas which are crucial for our proof.

\begin{lemma}
Consider the following optimization problem which is obtained by removing the constraint \eqref{eq39c} from problem \eqref{eq39}:
\begin{subequations}
    \begin{align}
    \bar{\boldsymbol{x}}^{t}=\arg \min_{\boldsymbol{x} \in\mathcal{X} } \quad &\bar f_0^t(\boldsymbol{x})\\
    \textrm{s.t.} \quad & \bar f_i^t(\boldsymbol{x})\le 0, i=1,\cdots,m.
    \end{align}
\label{eq40}%
\end{subequations}
Let $\{\boldsymbol{x}^{t}\}^{\infty }_{t=1}$ and $\{\bar{\boldsymbol{x}}^{t}\}^{\infty }_{t=1}$ denote the sequences of iterates generated by the proposed MCSCA algorithm when solving problems \eqref{eq39} and \eqref{eq40}, respectively, then we have
  \begin{equation}
      \lim_{t \to \infty} \left \| \bar{\boldsymbol{x}}^t-\boldsymbol{x}^t \right \| =0.
      \label{eq41}
  \end{equation}
  
\end{lemma}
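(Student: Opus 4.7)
The plan is to compare the unconstrained surrogate minimizer $\bar{\boldsymbol{x}}^t$ of \eqref{eq40} with the trust-region minimizer $\boldsymbol{x}^{t+1}$ of \eqref{eq39}, exploiting strong convexity of $\bar{f}_0^t$ together with the shrinking radius $\beta^t\to 0$. First I would observe that $\boldsymbol{x}^t$ itself is feasible for both \eqref{eq39} and \eqref{eq40}: the tangent surrogate $u_n^t$ is exact at $X_{nj}^t$, so $\bar{\mathbf{G}}^t=\mathbf{G}$ at $\boldsymbol{x}=\boldsymbol{x}^t$, and the regularizer $\mu\|\boldsymbol{x}-\boldsymbol{x}^t\|^2$ vanishes there. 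This motivates a case split: if $\|\bar{\boldsymbol{x}}^t-\boldsymbol{x}^t\|\le\beta^t$, then $\bar{\boldsymbol{x}}^t$ is feasible for \eqref{eq39} and, by the uniqueness induced by strong convexity, coincides with $\boldsymbol{x}^{t+1}$, so $\|\bar{\boldsymbol{x}}^t-\boldsymbol{x}^t\|\le\beta^t$, which vanishes.

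In the hard case $d^t\triangleq\|\bar{\boldsymbol{x}}^t-\boldsymbol{x}^t\|>\beta^t$, I would introduce the convex combination $\boldsymbol{x}(\lambda^t)\triangleq(1-\lambda^t)\boldsymbol{x}^t+\lambda^t\bar{\boldsymbol{x}}^t$ with $\lambda^t=\beta^t/d^t\in(0,1)$, which by convexity of the surrogate feasible set is feasible for \eqref{eq39}. Strong convexity of $\bar{f}_0^t$ with modulus $2\mu$, combined with the minimizer inequality $\bar{f}_0^t(\boldsymbol{x}^t)-\bar{f}_0^t(\bar{\boldsymbol{x}}^t)\ge\mu(d^t)^2$ and the optimality of $\boldsymbol{x}^{t+1}$ over \eqref{eq39}, would yield
\begin{equation*}
W^t-W^{t+1}\ge\mu\beta^t d^t.
\end{equation*}
Since $\{W^t\}$ is therefore non-increasing and bounded below by $0$, telescoping delivers $\sum_t\beta^t d^t<\infty$ (the easy case is absorbed because $d^t\le\beta^t$ there).

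To upgrade this summability to $d^t\to 0$, I would invoke a slow-variation argument: by standard sensitivity analysis of the parametric strongly convex program \eqref{eq40}, the solution map $\boldsymbol{x}^t\mapsto\bar{\boldsymbol{x}}^t$ is Lipschitz uniformly in $t$, so together with $\|\boldsymbol{x}^{t+1}-\boldsymbol{x}^t\|\le\beta^t$ one obtains $|d^{t+1}-d^t|\le C\beta^t$ for some constant $C$. If $d^t\not\to 0$, there is a subsequence $d^{t_k}\ge\delta$, and slow variation forces $d^t\ge\delta/2$ on a window around $t_k$ whose cumulative $\beta$-mass is at least $\delta/(2C)$; choosing $\{t_k\}$ sparse enough to make these windows disjoint gives $\sum_t\beta^t d^t\ge\sum_k(\delta/2)\cdot\delta/(2C)=\infty$, contradicting the summability. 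Hence $d^t\to 0$, which is \eqref{eq41}.

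The main obstacle is establishing the uniform Lipschitz continuity of $\boldsymbol{x}^t\mapsto\bar{\boldsymbol{x}}^t$: the surrogates $\bar{f}_i^t$ and their gradients depend on $\boldsymbol{x}^t$ through both the linearization coefficients $\nabla y_n(X_{nj}^t)$ entering $\bar{\mathbf{G}}^t$ and the quadratic regularizer, so to secure the Lipschitz bound one must verify a uniform constraint qualification (\emph{e.g.}, MFCQ) on the feasible set of \eqref{eq40} and bound the variation of the surrogate data on a compact region containing the iterates, after which a standard stability result such as Robinson's implicit function theorem supplies the required constant.
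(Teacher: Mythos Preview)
Your proposal is correct and follows essentially the same approach as the paper's Appendix~B: a descent inequality derived from the strong convexity of $\bar f_0^t$, Lipschitz continuity of the solution map $\boldsymbol{x}^t\mapsto\bar{\boldsymbol{x}}^t$ (which the paper imports from \cite{liu2019stochastic}), and a contradiction with $\sum_t\beta^t=\infty$. The organization differs---the paper splits into $\liminf$ and $\limsup$ steps with a level-crossing argument, whereas you telescope to summability of $\beta^t d^t$ and then run the slow-variation contradiction---but the underlying mechanism is identical, and your convex-combination feasible point $\boldsymbol{x}(\lambda^t)$ is in fact a cleaner substitute for the paper's informal assertion that $\boldsymbol{x}^{t+1}=\boldsymbol{x}^t+\eta_2\beta^t\boldsymbol{d}^t$.
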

\begin{proof}
Please refer to Appendix B for the proof.
\end{proof}

\begin{lemma}
\cite{liu2019stochastic} Consider a subsequence $\{\boldsymbol{x}^{t_j}\}^{\infty }_{j=1}$  converging to a limit point $\boldsymbol{x}^*$. There exist continuous functions $\hat f_i(\boldsymbol{x}),i=1,\cdots,m$ that satify
    \begin{equation}
        \lim _{j \rightarrow \infty} \bar{f}_{i}^{t_{j}}(\boldsymbol{x})=\hat{f}_{i}(\boldsymbol{x}), \forall \boldsymbol{x} \in \mathcal{X},
        \label{eq42}
    \end{equation}
    with the Slater's condition satisfied at $\boldsymbol{x}^*$ if there exist $\boldsymbol{x} \in \textrm{relint}\mathcal{X}$ such that 
    \begin{equation}
        \hat{f}_i(\boldsymbol{x})<0,\forall i =1,\cdots,m.
    \end{equation}
\end{lemma}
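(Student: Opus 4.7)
The lemma has two claims: (i) pointwise convergence of the surrogate constraint functions $\bar f_i^{t_j}$ to continuous limit functions $\hat f_i$ for $i=1,\ldots,m$, and (ii) Slater's condition for the limit problem at $\boldsymbol{x}^*$. The plan is to identify $\hat f_i$ explicitly as the surrogate one would construct using $\boldsymbol{x}^*$ as the reference point, and then argue convergence and continuity ingredient by ingredient.

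For claim (i), observe that the dependence of $\bar f_i^t$ on the iterate $\boldsymbol{x}^t$ enters in two places: the proximal regularizer $\mu\|\boldsymbol{x}-\boldsymbol{x}^t\|^2$ and, when $i=1$, the linearization of $y_n(\cdot)$ around $X_{nj}^t$ which appears inside $\lambda_{\min}(\bar{\mathbf{G}}^t)$. For $i\in\{2,\ldots,m\}$ the proximal term is the only iterate-dependent ingredient, so substituting $\boldsymbol{x}^*$ for $\boldsymbol{x}^{t_j}$ yields the obvious continuous limit $\hat f_i(\boldsymbol{x})=f_i(\boldsymbol{x})+\mu\|\boldsymbol{x}-\boldsymbol{x}^*\|^2$. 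For $i=1$, I would use continuity of $y_n(\cdot)$ and $\nabla y_n(\cdot)$ in \eqref{eq29} to show $u_n^{t_j}(X_{nj})\to u_n^*(X_{nj})$ pointwise; this propagates through the affine construction of $\bar{\mathbf{J}}^{t_j}(\hat{\boldsymbol{\theta}})$ in \eqref{eq28} to give entrywise convergence $\bar{\mathbf{G}}^{t_j}\to\hat{\mathbf{G}}^*$. Since $\lambda_{\min}$ of a symmetric matrix is Lipschitz continuous in its entries (by Weyl's inequality), this yields $-\lambda_{\min}(\bar{\mathbf{G}}^{t_j})\to -\lambda_{\min}(\hat{\mathbf{G}}^*)$, which defines $\hat f_1$. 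Continuity of each $\hat f_i$ then follows from composition of continuous ingredients.

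For claim (ii), I would exhibit a strictly feasible point for the limit system. Picking any bit allocation $\{X_{nj}\}$ with each entry sufficiently large and $W$ sufficiently large makes the MAC constraints in \eqref{eq23} and the CRLB constraint in \eqref{eq27b} hold strictly; an auxiliary $\mathbf{M}\succ \mathbf{J}^{-1}(\hat{\boldsymbol{\theta}})$ with $\textrm{tr}(\mathbf{M})<\epsilon$ can then be chosen to render \eqref{eq26} strict. Such a point exists as long as $\epsilon>\epsilon^*$, the minimum achievable CRLB noted after \eqref{eq24}. Because the limit surrogates differ from the originals only by the continuous proximal term, a small perturbation into $\textrm{relint}\,\mathcal{X}$ preserves every strict inequality, and this point witnesses Slater's condition for $\hat f_1,\ldots,\hat f_m$.

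The main obstacle I anticipate is handling the eigenvalue term in $i=1$: one has to invoke continuity of $\lambda_{\min}$ as a function of the matrix entries and rule out any pathology from eigenvalue crossings in the limit. Fortunately, only continuity (not differentiability) of $\lambda_{\min}$ is needed here, so Weyl's inequality suffices. Since the lemma is cited from \cite{liu2019stochastic}, an alternative route is to verify that the construction in \eqref{eq38} fits the generic surrogate-function framework of that reference, which effectively reduces to the same continuity check I outlined for $\bar f_1^t$.
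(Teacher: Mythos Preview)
The paper does not prove this lemma; it simply cites \cite{liu2019stochastic} and uses the result as a black box in the proof of Theorem~1. So there is no ``paper's own proof'' to compare against; your alternative route at the end---verifying that the surrogate construction \eqref{eq38} fits the framework of \cite{liu2019stochastic}---is in fact exactly what the paper implicitly relies on.

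That said, there is a misreading in your proposal. The second clause of the lemma is not a claim to be established but a \emph{definition}: it says that Slater's condition is declared to hold at $\boldsymbol{x}^*$ \emph{if} a strictly feasible point for the limit system exists. Nothing in Lemma~2 asserts that such a point does exist; rather, Theorem~1 takes Slater's condition as a hypothesis (``the limiting point $\boldsymbol{x}^*$ \ldots which satisfies the Slater's condition is a stationary solution''). Your argument for claim~(ii)---constructing a strictly feasible point under $\epsilon>\epsilon^*$---is therefore not part of proving the lemma, though it would be relevant if one wanted to argue that the hypothesis of Theorem~1 is actually met in this problem.

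Your treatment of claim~(i) is sound and is essentially the verification one would carry out to invoke \cite{liu2019stochastic}: the iterate dependence of $\bar f_i^t$ enters only through the proximal term and, for $i=1$, through the tangent linearizations $u_n^t(\cdot)$; continuity of $y_n$, $\nabla y_n$ and Lipschitz continuity of $\lambda_{\min}$ (Weyl) then give the pointwise limit and continuity of $\hat f_i$. One small addition you might make explicit is that for $i=0$ the same reasoning applies (the objective surrogate also carries the proximal term), so that $\hat f_0$ exists as well---this is used in \eqref{eq44}.
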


Based on Lemmas 1 and 2, we are ready to prove the following convergence result.
\begin{theorem}
Let $\boldsymbol{x}^0$ denote an initial feasible point, the limiting point $\boldsymbol{x}^*$ of $\{\boldsymbol{x}^{t}\}^{\infty }_{t=1}$ which satisfies the Slater's condition  is a stationary solution of problem \eqref{eq35}. 
\end{theorem}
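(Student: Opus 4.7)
The plan is to verify the KKT conditions of problem \eqref{eq37} at any limit point $\boldsymbol{x}^*$ by exploiting the two key matching properties of the surrogate functions used in the MCSCA scheme: at the expansion point $\boldsymbol{x}^t$, the surrogates $\bar{f}_i^t$ coincide with the original $f_i$ in both value, i.e., $\bar{f}_i^t(\boldsymbol{x}^t) - \mu\|\boldsymbol{0}\|^2 = f_i(\boldsymbol{x}^t)$, and in (sub)gradient, i.e., the tangent linearization \eqref{eq30} is constructed so that $\nabla u_n^t(X_{nj}^t) = \nabla y_n(X_{nj}^t)$, and $-\lambda_{\min}(\bar{\mathbf{G}}^t)|_{\boldsymbol{x}=\boldsymbol{x}^t} = -\lambda_{\min}(\mathbf{G}^t)$ with matching subgradients. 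These properties, together with Lemmas~1 and~2, reduce the analysis to a standard limiting KKT argument.

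First, by Lemma~1 the trust-region-type restriction \eqref{eq39c} becomes inactive asymptotically since $\|\bar{\boldsymbol{x}}^t-\boldsymbol{x}^t\|\to 0$. Passing to a subsequence $\{\boldsymbol{x}^{t_j}\}\to\boldsymbol{x}^*$ therefore gives $\bar{\boldsymbol{x}}^{t_j}\to\boldsymbol{x}^*$, so it suffices to analyse the KKT conditions of the unconstrained strongly convex surrogate problem \eqref{eq40} evaluated at $\bar{\boldsymbol{x}}^{t_j}$. These read
\begin{equation*}
\nabla \bar{f}_0^{t_j}(\bar{\boldsymbol{x}}^{t_j})+\sum_{i=1}^{m}\lambda_i^{t_j}\boldsymbol{g}_i^{t_j}=\boldsymbol{0},\ \ \lambda_i^{t_j}\ge 0,\ \ \lambda_i^{t_j}\bar{f}_i^{t_j}(\bar{\boldsymbol{x}}^{t_j})=0,\ \ \bar{f}_i^{t_j}(\bar{\boldsymbol{x}}^{t_j})\le 0,
\end{equation*}
where $\boldsymbol{g}_i^{t_j}\in\partial \bar{f}_i^{t_j}(\bar{\boldsymbol{x}}^{t_j})$ is a (sub)gradient, needed in particular for the eigenvalue constraint $i=1$.

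Second, I would establish boundedness of the multipliers $\{\lambda_i^{t_j}\}$ via Slater's condition at $\boldsymbol{x}^*$ guaranteed by Lemma~2: choose $\tilde{\boldsymbol{x}}\in\mathrm{relint}\,\mathcal{X}$ with $\hat{f}_i(\tilde{\boldsymbol{x}})<0$ for every $i\ge 1$; then $\bar{f}_i^{t_j}(\tilde{\boldsymbol{x}})<-\delta$ for some $\delta>0$ and all $j$ large, by the pointwise convergence \eqref{eq42} and continuity. Substituting $\tilde{\boldsymbol{x}}-\bar{\boldsymbol{x}}^{t_j}$ into the first-order optimality inequality for the convex surrogate problem produces a uniform bound $\sum_i \lambda_i^{t_j}\le C/\delta$. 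Extracting a further subsequence gives $\lambda_i^{t_j}\to\lambda_i^*\ge 0$ and $\boldsymbol{g}_i^{t_j}\to \boldsymbol{g}_i^*$. Using the tangency of the surrogates, $\nabla \bar{f}_i^{t_j}(\boldsymbol{x}^{t_j})\to\nabla f_i(\boldsymbol{x}^*)$ (and analogously $\boldsymbol{g}_1^*\in\partial(-\lambda_{\min}(\mathbf{G}^*))$ by outer semi-continuity of the Clarke subdifferential of the concave eigenvalue function), together with $\bar{f}_i^{t_j}(\bar{\boldsymbol{x}}^{t_j})\to f_i(\boldsymbol{x}^*)$, passing $j\to\infty$ in the KKT system above yields
\begin{equation*}
\nabla f_0(\boldsymbol{x}^*)+\sum_{i=1}^{m}\lambda_i^*\boldsymbol{g}_i^*=\boldsymbol{0},\ \ \lambda_i^*\ge 0,\ \ \lambda_i^* f_i(\boldsymbol{x}^*)=0,\ \ f_i(\boldsymbol{x}^*)\le 0,
\end{equation*}
which are precisely the KKT conditions of problem \eqref{eq37}.

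The main obstacle is the matrix-inequality constraint $f_1(\boldsymbol{x})=-\lambda_{\min}(\mathbf{G})$, which is non-smooth wherever the smallest eigenvalue has multiplicity greater than one. Handling it requires a careful selection and limit analysis of subgradients $\boldsymbol{g}_1^{t_j}\in\partial(-\lambda_{\min}(\bar{\mathbf{G}}^{t_j}))$: one must verify their uniform boundedness (via the $1$-Lipschitz property of $-\lambda_{\min}$ on the symmetric matrix space together with the boundedness of the linear map $\boldsymbol{x}\mapsto \bar{\mathbf{G}}^t(\boldsymbol{x})$) and then establish outer semi-continuity so that the limit $\boldsymbol{g}_1^*$ belongs to $\partial f_1(\boldsymbol{x}^*)$. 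The remaining smooth constraints derived from \eqref{eq23} and \eqref{eq27b} are linear or concave-in-$W$ in a routine way, so the limiting arguments there are immediate. The monotone-decrease condition $\beta^t\to 0$ combined with $\sum_t\beta^t=\infty$ is what ultimately drives Lemma~1 and hence the whole proof.
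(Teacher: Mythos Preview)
Your proposal is correct and follows essentially the same skeleton as the paper's proof: use Lemma~1 to discard the trust-region constraint \eqref{eq39c}, invoke Lemma~2 and the assumed Slater condition to secure multipliers, and then exploit the tangency of the surrogates \eqref{eq30} to transfer the KKT system to the original problem \eqref{eq37}. The one substantive difference is the order of operations: the paper first passes to a \emph{limit surrogate problem} with functions $\hat f_i$ (equation \eqref{eq44}), asserts that $\boldsymbol{x}^*$ solves it, writes the KKT conditions once there, and then identifies $\nabla\hat f_i(\boldsymbol{x}^*)=\nabla f_i(\boldsymbol{x}^*)$; you instead write the KKT system at each $\bar{\boldsymbol{x}}^{t_j}$, establish uniform boundedness of the multiplier sequence via the Slater point, and pass to the limit in the KKT equalities directly. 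Your route is the more standard SCA convergence template and is arguably more rigorous, since the paper's step from \eqref{eq41}--\eqref{eq42} to \eqref{eq44} implicitly relies on continuity of the $\arg\min$ that it does not justify. You also go further than the paper in treating the non-smoothness of $f_1(\boldsymbol{x})=-\lambda_{\min}(\mathbf G)$ via Clarke subgradients and outer semi-continuity; the paper simply writes $\nabla\hat f_i(\boldsymbol{x}^*)$ in \eqref{eq45}--\eqref{eq46} without addressing possible eigenvalue multiplicity, so your added care here is an improvement rather than a deviation.
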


\begin{proof}
According to Lemma 1, the constraint \eqref{eq39c} is inactive when $t \to \infty$, which implies that problem \eqref{eq39} is equivalent to problem \eqref{eq40} when $t$ is sufficiently large. Therefore, based on  \eqref{eq41} and \eqref{eq42}, we have
 \begin{subequations}
        \begin{align}
{\boldsymbol{x}}^{*}=\arg \min_{\boldsymbol{x} \in\mathcal{X} } \quad &\hat f_0(\boldsymbol{x})\\
\textrm{s.t.} \quad & \hat f_i(\boldsymbol{x})\le 0, i=1,\cdots,m,
\end{align}
\label{eq44}%
  \end{subequations}
as $t \to \infty$ and since the Slater's condition is satisfied, the KKT conditions of problem \eqref{eq44} indicates that there exist Lagrange dual variables $\lambda_1,\cdots,\lambda_m$ such that
\begin{equation}
    \begin{aligned}
\nabla \hat{f}_{0}\left(\boldsymbol{x}^{*}\right)+\sum_{i} \lambda_{i} \nabla \hat{f}_{i}\left(\boldsymbol{x}^{*}\right) & =\mathbf{0},\\
\hat{f}_{i}\left(\boldsymbol{x}^{*}\right) & \leq 0, \forall i=1, \ldots, m, \\
\lambda_{i} \hat{f}_{i}\left(\boldsymbol{x}^{*}\right) & =0, \forall i=1, \ldots, m .
\end{aligned}
\label{eq45}
\end{equation}
Then, owing to the fact that the surrogate function $u^t(X_{nj})$ is the tangent function of $y^t(X_{nj})$ according to \eqref{eq29}, and using \eqref{eq42}, we have the following facts:
\begin{equation}
\begin{aligned}
     \|\nabla\hat{f}_{i}(\boldsymbol{x}^*)-\nabla {f}_{i}(\boldsymbol{x}^*)\|=0,\forall i=1, \ldots, m .\\
      |\hat{f}_{i}(\boldsymbol{x}^*)- {f}_{i}(\boldsymbol{x}^*)|=0,\forall i=1, \ldots, m .
\end{aligned}
    \label{eq46}
\end{equation}

Finally, it follows from \eqref{eq45} and \eqref{eq46}  that $\boldsymbol{x}^*$ also satisfies the KKT conditions of the original problem \eqref{eq37}. Hence, $\boldsymbol{x}^*$ is a stationary point of problem \eqref{eq37}. This completes the proof.
\end{proof}

\subsubsection{Node Selection Strategy}

For node selection, we propose in this work a greedy strategy to iteratively exclude the most non-informative node from cooperation, such that the communication cost can be further reduced.

Specifically, we first initialize the selected node set as $\Omega^0 = \{1,2,\cdots,N\}$ by taking all the nodes into consideration. In each iteration $t_1$, Algorithm 1 is executed to obtain the optimized quantization bit numbers $\{\bar X^{t_1}_{nj}\}$ and channel use number $\bar W^{t_1}$ according to the current set of selected nodes $\Omega^{t_1}$. Then, in the next iteration $t_1+1$, the node with the minimum number of allocated quantization bits, denoted by $n^{t_1}$, is excluded from $\Omega^{t_1}$ because generally this node exhibits the least contribution to localization performance improvement.  Consequently, a new selected node set is obtained by $\Omega^{t_1+1}=\Omega^{t_1} \setminus  n^{t_1}$. Next, Algorithm 1 is executed again to obtain the new optimized quantization bit numbers $\{\bar X^{t_1+1}_{nj}\}$ and channel use number $\bar W^{t_1+1}$.  If $\bar W^{t_1+1}> \bar W^{t_1}$, i.e., excluding the current node $n^{t_1}$ does not lead to a reduction in the number of channel uses, then the algorithm is terminated; otherwise, the above greedy node selection process continues. It is worth noting that in the above iterative process, Algorithm 1 may not be feasible, this indicates that the current node selection scheme cannot satisfy the CRLB constraint. In this case, the proposed algorithm should also be terminated. The above strategy is summarized in Algorithm 2.

\makeatletter
%%\newcommand{\removelatexerror}{\let\@latex@error\@gobble}
%%\makeatother
\floatname{algorithm}{Algorithm}

  \renewcommand{\algorithmicrequire}{\textbf{Input:}}
  \renewcommand{\algorithmicensure}{\textbf{Output:}}
  %\removelatexerror
  
  \begin{algorithm}[H]
    \caption{Proposed Greedy Node Selection Strategy}
    \begin{algorithmic}[1]
      \REQUIRE  $\{\hat{\tau}_n\}$ , $\{\hat\alpha_n\}$,  $\{\sigma^2_n\}$,  $\{B_n\}$,  $\{P_n\}$ , $\{g_n\}$ and $N_0$.
      \ENSURE  $\{{X}^*_{nj}\}$ , ${W}^*$ and $\Omega^*$.
      \STATE \textbf{Initialize} $\Omega^0 = \{1,2,\cdots,N\}$, $t_1=0$, $W^{-1}=+\infty$.
      \REPEAT
      \STATE Execute Algorithm 1 to obtain $\{\bar X^{t_1}_{nj}\}$,  $\bar W^{t_1}$ and $n^{t_1}$ according to  $\Omega^{t_1}$;
      \STATE Update the set of selected nodes by $\Omega^{t_1+1}=\Omega^{t_1} \setminus  n^{t_1}$;
      \STATE $t_1=t_1+1$;
      \UNTIL ($\bar W^{t_1}> \bar W^{t_1-1}$) or the Algorithm 1 is infeasible
      \STATE $\Omega^*=\Omega^{t_1-1}$, $X^*_{nj}=\bar X^{t_1-1}_{nj}$ and $W^* = \bar W^{t_1-1}$;
    \end{algorithmic}
  \end{algorithm}

It is important to note that the overall localization performance is influenced not only by the number of quantization bits but also by the network topology. In the proposed algorithm, the geometric configuration of the sensing nodes is taken into consideration through the calculation of the CRLB as described in \eqref{eq21} and \eqref{eq22}.

Next, we endeavor to analyze the computational complexity of the proposed greedy node selection algorithm. It is imperative to note that within each iteration of Algorithm 2, the proposed MCSCA algorithm is executed once. 
Hence, the complexity of Algorithm 2 is dominated by  the complexity of the MCSCA algorithm.
Besides, in each iteration of the MCSCA algorithm, problem \eqref{eq32} is solved by resorting to the  interior-point method (IPM), whose complexity can be expressed as \cite{wang2014outage}
\begin{equation}
    C_{IPM}=n\sqrt{4+2^N} [(4^3+2^N)+n(4^2+2^N)+n^2],
\end{equation}
where $n=2K_nN+5$. Let $L$ denote the average iteration number of the MCSCA algorithm, then the complexity order of the MCSCA algorithm can be obtained as
\begin{equation}
    C_1=O(2^{3N/2} N^2K_n^2L ).
\end{equation}
Next, note that in the worst-case scenario, our greedy strategy executes the MCSCA algorithm $N$ times, the overall worst-case complexity order of Algorithm 2  is given by
\begin{equation}
    C_{2}=O\left(\sum_{i=1}^{N}2^{3i/2}i^2K_n^2L\right)=O(2^{3N/2}N^2K_n^2L ).
\end{equation}
It is noteworthy that although the complexity of the proposed algorithm seems to be exponential with respect to $N$, the actual complexity will not be excessively high since the value of $N$ will not be very large in practice.

To summarize,  although the greedy approach of removing nodes with the fewest allocated bits is not  optimal, we believe it offers a reasonable balance between performance and complexity within the current scope of our work.

\makeatletter
%%\newcommand{\removelatexerror}{\let\@latex@error\@gobble}
%%\makeatother
\floatname{algorithm}{Algorithm}

  \renewcommand{\algorithmicrequire}{\textbf{Input:}}
  \renewcommand{\algorithmicensure}{\textbf{Output:}}
  %\removelatexerror
  
  \begin{algorithm}[t]
    \caption{Proposed Low-Complexity Bit Reallocation Algorithm}
    \begin{algorithmic}[1]
      \REQUIRE  $\{\hat{\tau}_n\}$ , $\{\hat\alpha_n\}$,  $\{\sigma^2_n\}$,  $\{B_n\}$,  $\{P_n\}$ , $\{g_n\}$ and $N_0$.
      \ENSURE $\{{X}^*_{nj}\}$ , ${W}^*$ and $\Omega^*$.
      \STATE \textbf{Initialize} $\text{CRLB}_{opt}=+\infty$ and $\Omega^*=\{1,2,\cdots,N\}$.
      \STATE Execute Algorithm 1 based on $\Omega^*$ to obtain $\{\bar{X}_{nj}\}$;
      \STATE Set $X_{nj}=\left \lceil \bar{X}_{nj} \right \rceil,n \in \Omega, j\in[1,2K_n]$;
      \WHILE{$\text{CRLB}_{opt} > \epsilon$}
      \FOR{$n$ = 1 : $N$}
      \FOR{$j=1:2K_n$}
      \STATE $X_{nj}=X_{nj}+1$;
      \STATE Calculate $\text{CRLB}_{\hat{\boldsymbol\theta}}$ according to \eqref{eq21};
      \IF{$\text{CRLB}_{\hat{\boldsymbol\theta}} \le \text{CRLB}_{opt}$}
      \STATE $\text{CRLB}_{opt}=\text{CRLB}_{\hat{\boldsymbol\theta}}$;
      \STATE $n'=n$, $j'=j$;
      \ENDIF
      \STATE $X_{nj}=X_{nj}-1$;
      \ENDFOR
      \ENDFOR
      \STATE $X_{n'j'}=X_{n'j'}+1$;
      \ENDWHILE
      \STATE Set $X_{nj}^*=X_{nj}$;
      \FOR{$n=1:N$}
      \IF{$\sum_j X_{nj}=0$ }
      \STATE $\Omega^*=\Omega^*\setminus n$;
      \ENDIF
      \ENDFOR
      \STATE Set $W^*$ to be the minimum integer that satisfies \eqref{eq23};
\end{algorithmic}
  \end{algorithm}

\subsubsection{Low-Complexity Bit Reallocation Algorithm}
The complexity of the above node selection algorithm is relatively high since Algorithm 1 is required to be executed multiple times. To reduce its computational complexity, we develop a bit reallocation idea and further propose a low-complexity algorithm in the following.

Specifically, we first initialize the selected node set as $\Omega = \{1,2,\cdots,N\}$ and execute the proposed MCSCA algorithm once to obtain the optimized quantization bit numbers $\{\bar X_{nj}\}$ (before rounding down). Then, a discrete solution can be obtained by performing the floor operation on each $\bar{X}_{nj}$, i.e., $X_{nj}=\left \lceil \bar{X}_{nj} \right \rceil,n \in \Omega, j\in[1,2K_n]$. Next, we propose to iteratively increase the quantization bit number $X_{n' j'}$ by one bit, where $n'$ and $j' $ is obtained by  $[n',j']=\arg\min_{n,j} \text{CRLB}_{\hat{\boldsymbol\theta}}|_{X_{nj}=X_{nj}+1}$. In other words, we choose the most important signal sample among all the receivers, i.e., increase the quantization bit number of this signal sample can reduce the CRLB most significantly. Repeat this process until the resulting CRLB satisfies the constraint \eqref{eq24b}. The final channel use number $W^*$ is set to be the minimum integer that satisfies the MAC capacity constraint \eqref{eq23}. The details are presented in Algorithm 3.

In terms of computational complexity,  we can see the complexity orders of Algorithms 2 and 3 are the same. However, it is straightforward to see that the actual complexity of Algorithm 3 is much lower than that of Algorithm 2, since the proposed MCSCA algorithm only needs to be run once in Algorithm 3.

\section{Simulation Results}

\begin{figure}[t!]
\centerline{\includegraphics[width=1\linewidth]{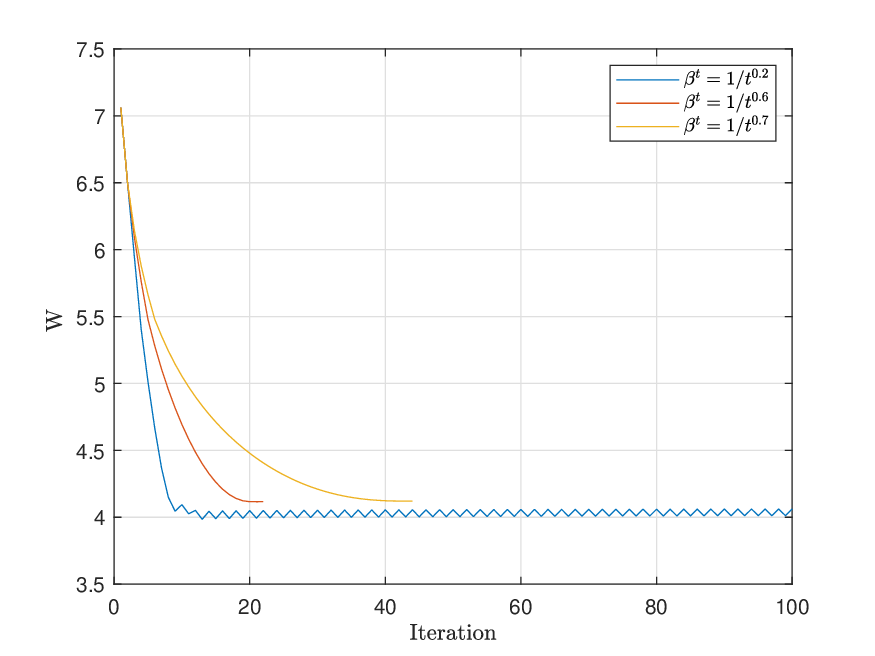}}
\caption{Convergence behavior of the proposed MCSCA algorithm.}
\label{fig3}
\end{figure}

In this section, we provide numerical results to evaluate
the performance of the proposed HISDCS scheme and draw useful insights. Without loss of generality, we  consider a common linear topology of the sensing receivers, as illustrated in Fig. \ref{fig1}. Unless otherwise specified, $N=5$ sensing receivers are considered, and they are located equidistantly in a straight line with 50 m spacing. The transmitter is located 1000 m away from this line. We assume that the target is located between the transmitter and receivers, uniformly distributed in the 100 m $\times$ 50 m region 50 m away from the line of sensing receivers. In our simulations, the sensing signal is set to be a Gaussian pulse signal, i.e., $s(t)=\frac{2^{0.25}}{T^{0.5}} e^{\frac{-\pi t^2}{T^2} }$ with $T=2\times 10^{-8}$ s. The carrier frequency, bandwidth and Nyquist sampling period are $f_c =3.55$ GHz, $B=50$ MHz and $T_s = \frac{1}{2B}=10^{-8}$ s, respectively. Considering that the main lobe of $s(t)$ is in $-1.5T \le t \le 1.5T$, the sampling duration is set to $T_d = 4T = 8 \times 10^{-8}$ s , which is relatively large to contain the main lobe of $s(t)$. Therefore, the number of sampling points within one pulse for the proposed scheme is $K_n = \frac{T_d}{T_s}+2=10$. 
Besides, the line-of-sight (LOS) model is employed to model the channels from the transmitter to the target, from the target to the receivers, and from the receivers to the FC, which is given by\footnote{The sensing signal transmitted might propagate through a primary LOS path along with some NLOS paths before reaching the receiver. Normally, only the LOS path is exploited for localization, while the NLOS paths can be treated as a part of the clutter.} \cite{rappaport2017overview}

\begin{equation}
    L=32.4+20\log_{10}(d(\textrm{km}))+20\log_{10}(f_c(\textrm{GHz}))(\textrm{dB}),
\end{equation}
where $L$ is the squared pathloss coefficient in dB, $d$ is the distance between nodes. 

\begin{figure}[t!]
\centering
\subfloat[MSE performance comparison under different SNR regimes]{\includegraphics[width=1\linewidth]{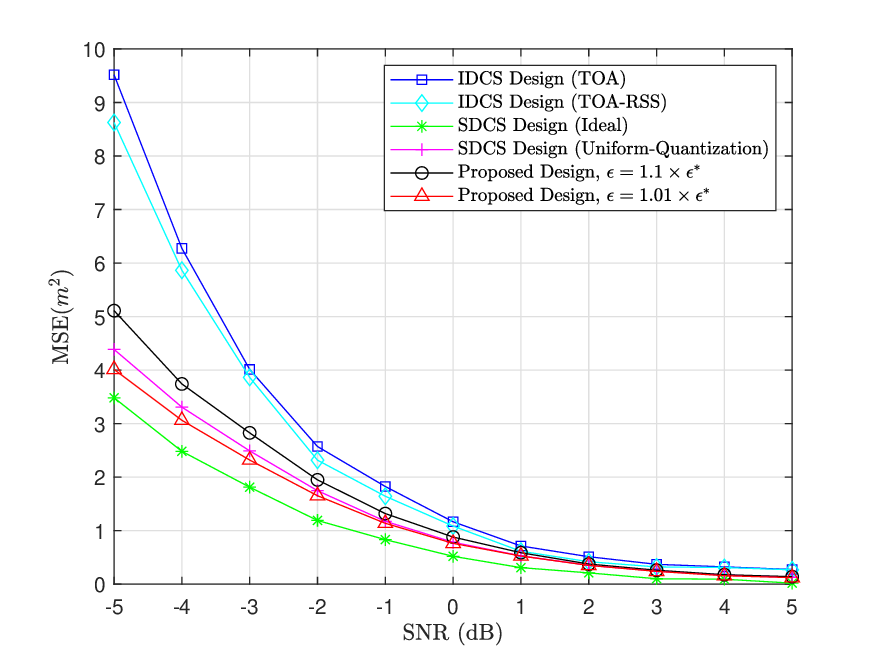}}%
%\label{fig_first_case}}
\hfil
\subfloat[Channel use number comparison under different SNR regimes]{\includegraphics[width=1\linewidth]{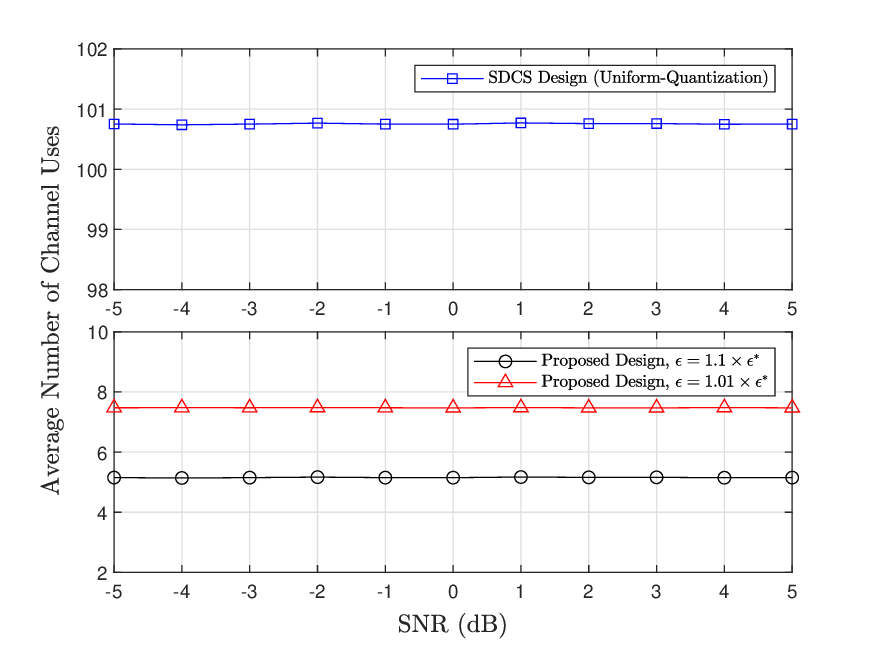}}%
%\label{fig_second_case}}
\caption{Comparison between the proposed and baseline designs under different SNR regimes.}
\label{fig_snr_scheme}
\end{figure}

For comparison, the following three baselines are considered: 1) A TOA based IDCS scheme, where each receiver only sends the estimated time delay $\hat\tau_n$ and effective reflecting coefficient $\hat\alpha_n$ to the FC for target localization based on the ML rule; 2) A TOA-RSS based IDCS scheme, where the estimated time delay $\hat\tau_n$ , effective reflecting coefficient $\hat\alpha_n$ , and the signal power around $\hat \tau_n$ are  transmitted to the FC for localization; 3) A uniform-quantization SDCS scheme, where each echo signal sample is uniformly quantized using 8 bits and then transmitted to the FC for localization; 4) An ideal SDCS scheme with unlimited communication capacity, i.e., assuming perfect echo signals are available at the FC.

\begin{figure}[t!]
\centering
\subfloat[MSE performance comparison under different $N$]{\includegraphics[width=1\linewidth]{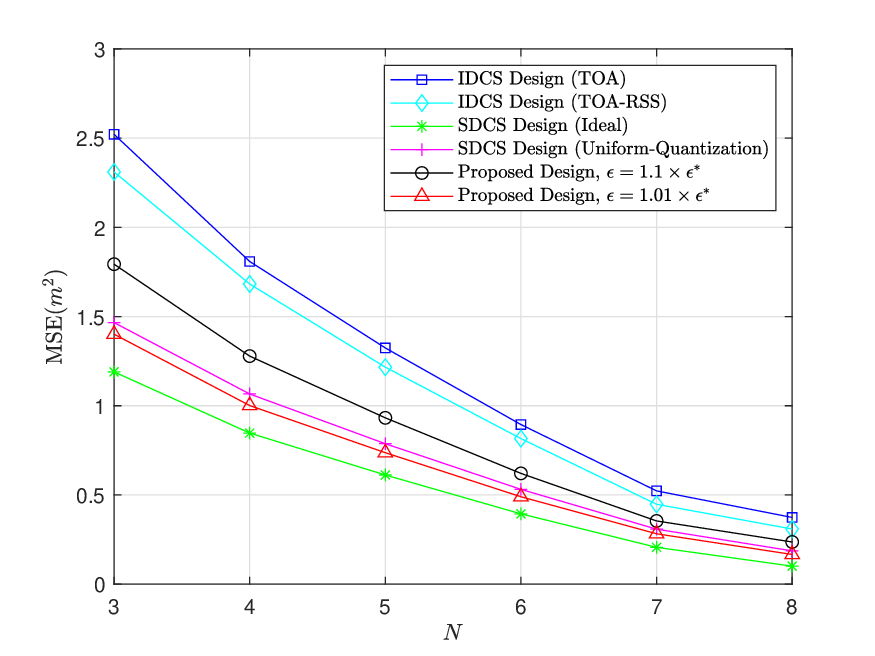}}%
%\label{fig_first_case}}
\hfil
\subfloat[Channel use number comparison under different $N$]{\includegraphics[width=1\linewidth]{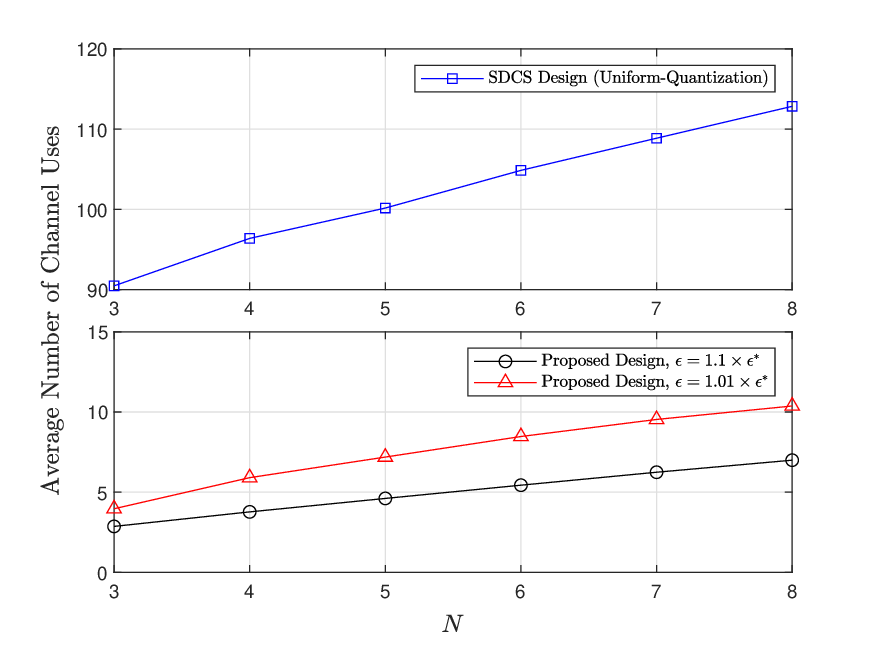}}%
%\label{fig_second_case}}
\caption{Comparison between the proposed and baseline designs under different numbers of sensing receivers.}
\label{fig_n_scheme}
\end{figure}

First, Fig. \ref{fig3} plots the convergence behavior of the MCSCA algorithm under different choices of the step size  $\beta^t$. It is shown that by choosing a feasible initial point and a proper step size, the proposed MCSCA algorithm is able to converge within 20 iterations. Besides, we can see that choosing a proper step size is very important for the proposed algorithm, since a small step size can result in a slow convergence rate, while selecting a large step size may induce oscillations.

Then, we compare in Fig. \ref{fig_snr_scheme}  the MSE performance and the average channel use number $\overline{W} $ between the baseline designs and the proposed HISDCS design under different SNR regimes. It can be observed that in the medium-to-low SNR regime, the proposed HISDCS scheme outperforms the TOA based IDCS and the TOA-RSS based IDCS design in terms of MSE performance. This is because incorporating signal-domain information provides a performance boost compared to using only information-domain estimates. Furthermore, even with a limited number of quantization bits for the signal-domain data, the constraint on the CRLB, i.e., $\text{CRLB}_{\hat{\boldsymbol\theta}} \le \epsilon$ , ensures an efficient quantizzation bit allocation scheme that makes the proposed method  outperform the IDCS approach. Besides, it is worth noting that employing a smaller $\epsilon$ in constraint \eqref{eq24b} enhances the MSE performance of the HISDCS design. Especially when $\epsilon$ is set to $1.01\times \epsilon^*$ ($\epsilon^*$ denotes the minimum achievable CRLB as mentioned in Section IV), the proposed HISDCS scheme  outperforms the uniform-quantization SDCS scheme and its performance is remarkably close to the ideal SDCS design, which is regarded as the localization performance lower bound. Besides, from Fig. 4 (b), it is evident that the proposed HISDCS scheme sigificantly reduces the communication overhead as compared to the uniform-quantization SDCS scheme, since the former only needs 8 channel uses, while almost 101 channel uses are required by the latter.

\begin{figure}[t!]
\centerline{\includegraphics[width=1\linewidth]{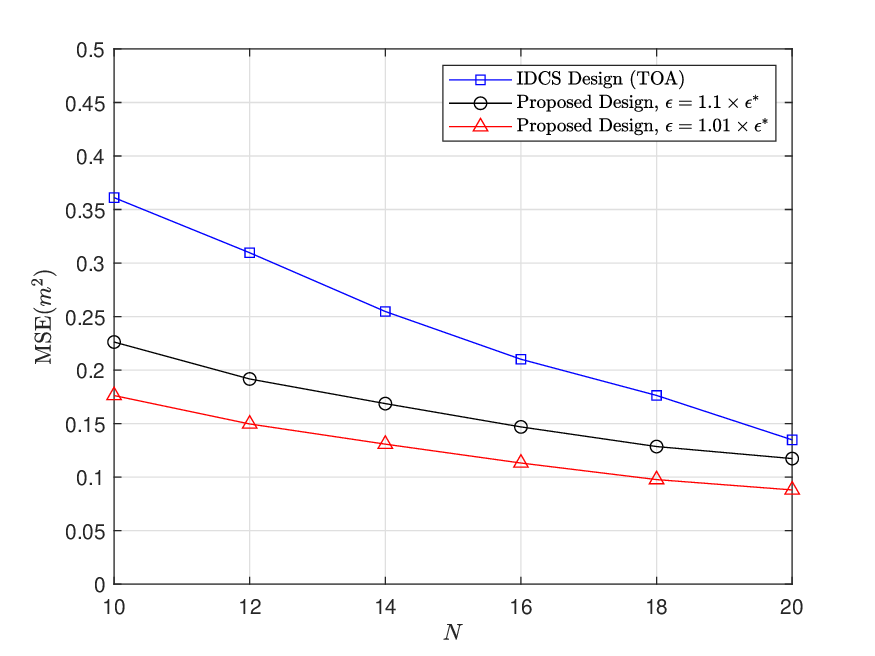}}
\caption{MSE performance comparison under a large number of sensing nodes.}
\label{fig66}
\end{figure}

Fig. \ref{fig_n_scheme} presents the MSE performance and  average channel use number comparison between the baseline designs and the proposed HISDCS design under different numbers of sensing receivers $N$, where we set the average SNR to  0 dB. From Fig. 5 (a), it is seen that the localization performance of the proposed HISDCS scheme is superior to that of the TOA based IDCS and TOA-RSS based schemes. Additionally, as $N$ increases, the MSE of all the considered schemes decreases, which is reasonable since larger $N$ implies higher cooperation gain which is beneficial for localization performance improvement. Besides, from Fig. 5 (b), it can be observed that the communication overhead of the proposed HISDCS scheme is substantially reduced as compared to that required by the uniform-quantization SDCS scheme, and with the increasing of  $N$, there is a consistent decrease in $\bar W$ across all schemes, which is mainly due to the fact that the amount of information transmitted to the FC for localization performance improvement increases as $N$ increases.

Fig. \ref{fig66} shows the MSE performance comparison under a larger number of nodes. As can be seen, the localization performance of the
proposed HISDCS scheme is still superior to that of the TOA based IDCS scheme when $N \ge 10$. This is mainly because the proposed method combines the advantages offered by  information-domain parameters and signal-domain information, and the latter allows for a better extraction of the signal-domain features, and therefore enhances the overall  MSE performance.
However, as the number of nodes increases, the fusion of the delay estimates becomes more accurate, which effectively reduces the impact of individual estimation errors and thereby limits the additional benefits gained from signal-domain processing in the proposed scheme.

\begin{figure}[t!]
\centerline{\includegraphics[width=1\linewidth]{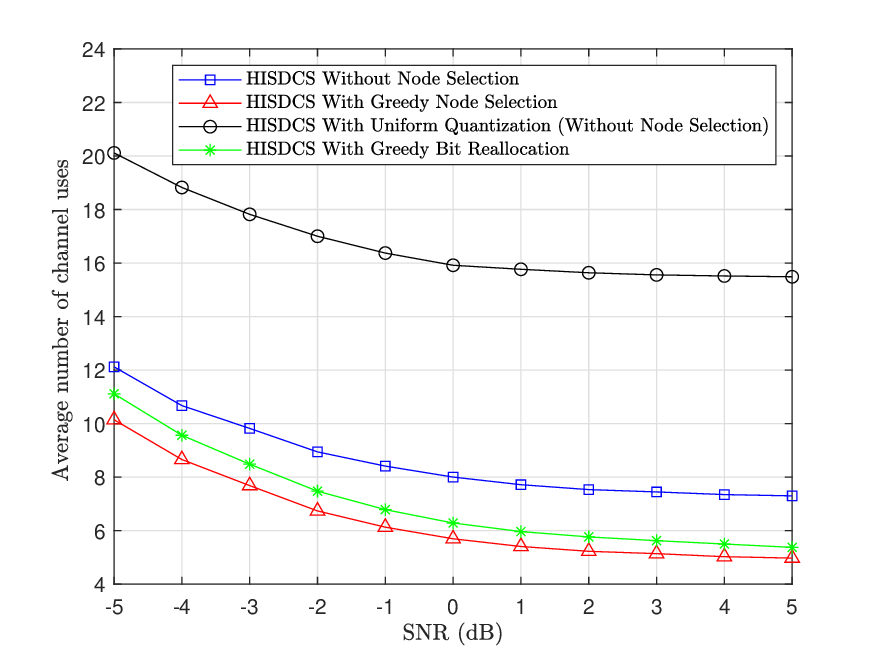}}
\caption{Channel use number comparison under different SNR regimes.}
\label{fig6}
\end{figure}

\begin{figure}[t!]
\centerline{\includegraphics[width=1\linewidth]{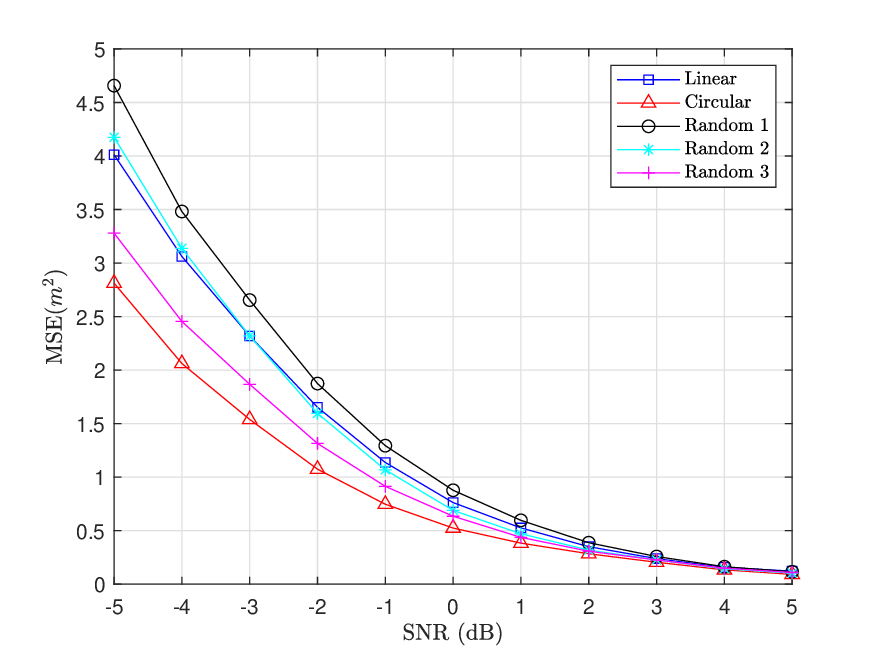}}
\caption{MSE performance comparison under different topologies.}
\label{fig7}
\end{figure}

In Fig. \ref{fig6}, we show the average number of channel uses $\overline{W}$ required by the proposed HISDCS scheme with or without node selection, and compare the performance of the proposed node selection strategy with the low-complexity bit reallocation algorithm presented in Section IV. From this figure, it is seen that compared to the scheme with the same number of quantization bits allocated to each receiver, the proposed MCSCA algorithm is much better in terms of communication cost. Besides, it can be observed that when node selection is employed, the number of channel uses $\overline{W}$ can be substantially reduced, for example, $\overline{W}$ can be reduced by $30.1\%$ when $\textrm{SNR} = 0$ dB, indicating a reduction in communication overhead. Moreover, it  is noteworthy that the proposed greedy bit reallocation algorithm requires a  slightly higher average number of channel use than the scheme with greedy node selection strategy, but it achieves a substantial reduction in computational complexity (as discussed in Section IV).

Next, we present in Fig. \ref{fig7} the MSE performance of the proposed HISDCS scheme  versus  SNR, under different receiver topologies. In the circular topology case, the sensing receivers are symmerically distributed on a 500 m radius circle around the transmitter and the target is uniformly distributed within the circle. We can observe from this figure that the localization performance under the circle topology is better than those under the linear topology and random topologies, which means that the receiver topology should be carefully designed in practice.

\begin{figure}[t!]
\centerline{\includegraphics[width=1\linewidth]{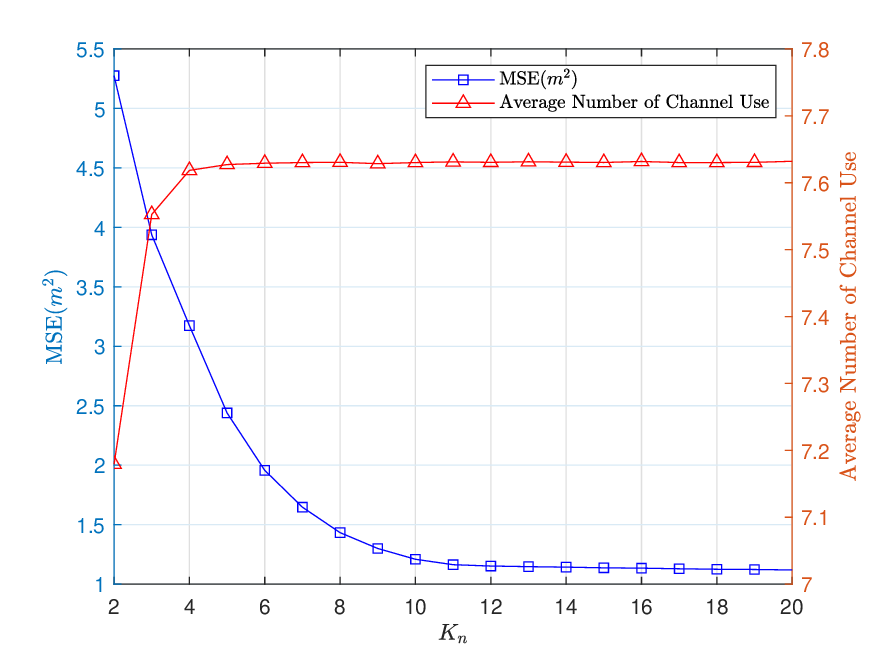}}
\caption{MSE performance and required channel use number versus the number of sampling points $K_n$.}
\label{fig8}
\end{figure}

Then, we show in Fig. \ref{fig8} the mean squared error (MSE) performance for target location estimation and the average channel use number $\overline{W}$ achieved by Algorithm 2 under different numbers of sampling points $K_n$. In this simulation, average signal-noise ratio (SNR) of the receivers is set to be -2 dB. From Fig. \ref{fig8}, it can be seen that the localization performance improves with the increasing of $K_n$. However, when $K_n$ is large, the localization performance improvement gradually saturates, and when $K_n$ exceeds 14, the localization performance reaches a plateau. Besides, we can observe from Fig. 8  that the required number of channel uses $\overline{W}$ first increases as $K_n$ increases, and then gradually stabilizes as $K_n$ surpasses 4. Hence, the value of $K_n$ needs to be carefully designed, for example, setting it to 10, which enables the achievement of favorable localization performance without incurring significant communication overhead, and the complexity of the MCSCA will not be excessively high.

\begin{figure}[t!]
\centerline{\includegraphics[width=1\linewidth]{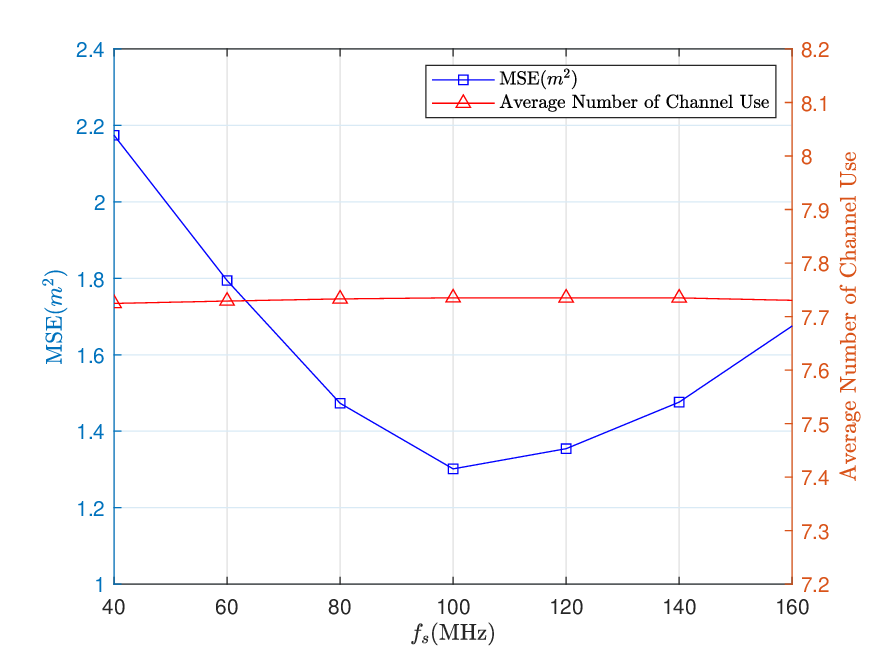}}
\caption{MSE performance and required channel use number versus the sampling frequency $f_s$.}
\label{fig_fs}
\end{figure}

Fig. \ref{fig_fs} illustrates the impact of the sampling frequency $f_s$ on the MSE performance and the average channel use number $\overline{W}$, under the given sampling point number $K_n=10$. It is observed that the best  localization performance is achieved when $f_s$ is about 100 MHz, otherwise, the performance gradually deteriorates. This is mainly due to the fact that, an excessively small or large $f_s$ may result in the inability to extract useful localization information within the main lobe region of the echo signal under given $K_n$, consequently leading to a decline in the localization performance. Besides, we can see that the channel use number $\overline{W}$ barely changes as $f_s$ varies, which is expected since $K_n$ is fixed in this simulation.

\begin{figure}[t!]
\centerline{\includegraphics[width=1\linewidth]{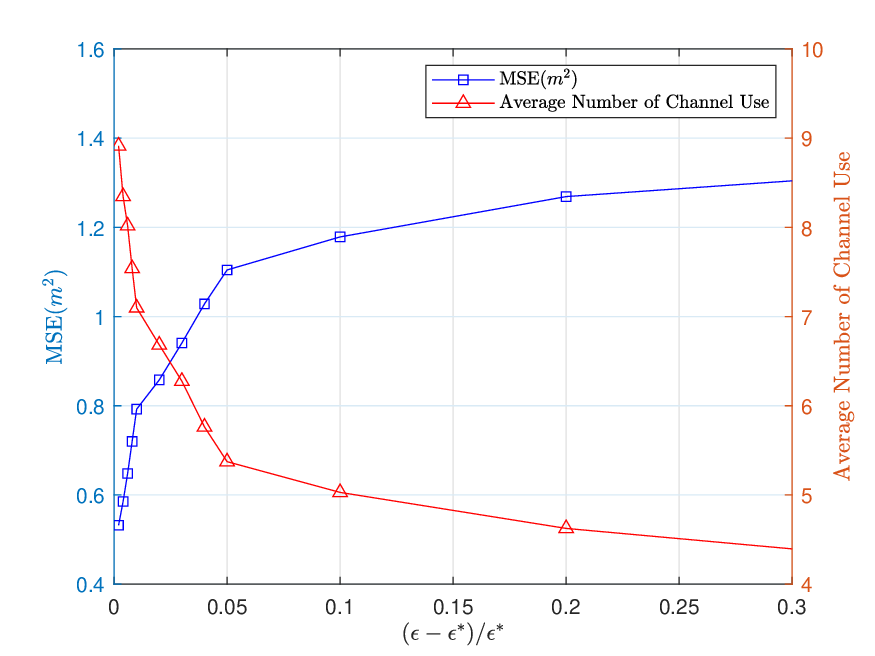}}
\caption{MSE performance and the number of channel use comparison under different values of $\epsilon$.}
\label{fig10}
\end{figure}

Finally, we present in Fig. \ref{fig10} the MSE performance and the average number of channel use $\overline{W}$ achieved by the proposed HISDCS scheme under different values of $\epsilon$. It can be observed that as $\epsilon$ increases, the MSE  gradually increases  while $\overline{W}$ decreases. This is mainly owing to the fact that the  demand for meeting the CRLB constraint gradually diminishes as $\epsilon$ increases, thus leading to localization performance degradation and communication overhead reduction. Hence, there is a trade-off between these two performance metrics that should be carefully considered based on the practical requirements.

\section{Conclusion, Challenges and Future Work}

In this work, we investigated a cooperative sensing optimization problem in a multi-functional network. In order to improve the sensing accuracy while reduce the communication cost, we designed a HISDCS scheme, where each sensing receiver transmits the estimated time delay and effective reflecting coefficient, as well as the received sensing signal sampled around the estimated delay, to the FC. An optimization problem was formulated to minimize the number of channel uses under both CRLB and limited MAC capacity constraints. To tackle this problem, we proposed a MCSCA algorithm for quantization bit allocation and a greedy strategy for node selection. The convergence of the MCSCA algorithm to the set of KKT solutions was theoretically proved. Besides, in order to further reduce the computational complexity, a greedy bit reallocation algorithm was proposed. 
Numerical simulations showed that our proposed HISDCS scheme is able to outperform the IDCS and SDCS schemes.

While the proposed approach has shown promising results, we list several challenges in practical implementation as follows:
\begin{enumerate}
\item{\textbf{Multipath interference:} In practical cooperative sensing and wireless communication systems, multipath fading and interference may degrade the overall performance. Thus, robust signal processing techniques are usually required to address this problem, which may however increase the implementation complexity.}
\item{\textbf{Synchronization:} Achieving accurate synchronization between transmitters and receivers, especially in dynamic environments, can be challenging. This may impact both the precision of signal decoding and the overall system performance.}
\item{\textbf{Hardware overhead:} The proposed solution, particularly with multiple receivers and high-precision quantization, may require a large amount of hardware resources, including logic processing unit, memory, and power consumption, etc. This hardware cost is an important factor that should be considered in real-world applications. }
\end{enumerate}

Despite these challenges, there are several promising directions for future research:
\begin{enumerate}
\item{\textbf{Multi-target detection:} A natural extension of our current work is to further and explore the multi-target detection scenario, where  developing more sophisticated algorithms is required to handle the complexity of detecting multiple targets simultaneously, and this is a common challenge in many real-world applications.}
\item{\textbf{Incorporation of MIMO systems:} Another potential improvement is to integrate the multiple-input multiple-output (MIMO) technology into the proposed HISDCS framework, which is envisioned to further enhance the system capacity and robustness by exploiting the spatial diversity and multiplexing gain offered by MIMO. Investigating the interaction between MIMO and our proposed framework would be a valuable extension.}
\item{\textbf{Trackling More complex channel models:} Extending the current channel model to account for more realistic channel conditions, such as non-line-of-sight (NLOS) propagation, Doppler effects, and dynamic environments, would provide a more accurate representation of real-world scenarios and thus leads to more robust solutions. }
\item{\textbf{Exploring Multi-Sensor Time Delay Based Truncation:} Signal truncation using multi-sensor time delay estimates is an interesting solution to further improve the localization accuracy. By leveraging the estimated target location from multiple sensors, a more precise delay for signal truncation can be computed, which thereby mitigates the issue caused by the inaccurate single-sensor delay estimates and enhances the overall robustness of cooperative sensing.}
\end{enumerate}

\section*{APPENDIX A}
For the considered Gaussian multiple access channel, \eqref{eq23} can be obtained by referring to  the classical capacity theorem in \cite{cover1999elements}, which is given below.
\begin{theorem}[\textbf{m-User Multiple-Access Channel Capacity}]
Consider a multiple-access channel with $m$ users and let $X_i,i=1,\cdots,m$ denote the transmit signal of the $m$ users, then the received signal $Y$ can be expressed as
\begin{equation}
    Y = \sum_{i=1}^{m} X_m + Z,
\end{equation}
where $Z \sim \mathcal{CN}(0,N_0)$ is the complex Gaussian noise.
The capacity region of the m-user multiple-access channel is the closure of the convex hull of the rate vectors satisfying
\begin{equation}
    R(S) \le I(X(S);Y\mid X(S^c))\quad \textrm{for all } S \in \{1,2,\cdots,m\},
    \label{eq59}
\end{equation}
where $S^c$ denotes the complement of $S$, $R(S)=\sum_{i \in S}R_i$, and $X(S)=\{X_i:i\in S\}$.\\
\end{theorem}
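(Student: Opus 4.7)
The plan is to prove the theorem in the standard two-part fashion, establishing achievability and converse separately, and then appealing to a time-sharing argument to obtain the closure of the convex hull. Since the statement is the classical Gaussian multiple access channel (MAC) capacity theorem, I would follow the coding-theoretic framework of Cover and Thomas (cited as \cite{cover1999elements}), and specialize the random-coding arguments to the complex Gaussian case at the end.

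For the achievability direction, I would first fix input distributions $p(x_1),\ldots,p(x_m)$ (in our Gaussian case, independent $\mathcal{CN}(0,P_i)$ inputs subject to the relevant power constraints). Then I would generate $m$ independent random codebooks, one per user, with $2^{nR_i}$ codewords drawn i.i.d.\ from $p(x_i)$. The fusion center/decoder uses joint typicality decoding: it looks for the unique $m$-tuple of message indices whose codewords are jointly typical with the received sequence $Y^n$. The error event decomposes over subsets $S\subseteq\{1,\ldots,m\}$ corresponding to which users' messages are decoded incorrectly. A standard union-bound argument, combined with the joint AEP, shows that the probability of a subset-$S$ error decays to zero provided
\[
\sum_{i\in S} R_i \;<\; I\bigl(X(S);\,Y \,\big|\, X(S^c)\bigr),
\]
which matches \eqref{eq59}. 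Finally, specializing to our channel $Y=\sum_i X_i + Z$ with $Z\sim\mathcal{CN}(0,N_0)$ and $X_i\sim\mathcal{CN}(0,P_i g_i)$ mutually independent yields $I(X(S);Y\mid X(S^c))=\log\bigl(1+\sum_{i\in S}P_i g_i/N_0\bigr)$, recovering the Gaussian form used in \eqref{eq23}.

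For the converse, I would assume any sequence of $(2^{nR_1},\ldots,2^{nR_m},n)$ codes with vanishing error probability and show that the rate tuple must satisfy \eqref{eq59}. The key tool is Fano's inequality applied to each subset of users: conditioning on the messages $W(S^c)$ of users outside $S$, one obtains $H(W(S)\mid W(S^c),Y^n)\le n\epsilon_n$ with $\epsilon_n\to 0$. Then I would expand
\[
n\sum_{i\in S}R_i \;=\; H(W(S)\mid W(S^c)) \;\le\; I\bigl(W(S);Y^n\mid W(S^c)\bigr) + n\epsilon_n,
\]
apply the chain rule on the mutual information, use memorylessness of the channel together with the fact that $X_i^n$ is a function of $W_i$, and finally introduce a uniform time-sharing random variable $Q$ to obtain a single-letter bound of the form $\sum_{i\in S}R_i \le I(X(S);Y\mid X(S^c),Q)+\epsilon_n$. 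Taking $n\to\infty$ and applying the cardinality-reduction/convex-hull argument (closing over time-sharing distributions) delivers the claimed region.

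The main obstacle I anticipate is the careful bookkeeping in the achievability error analysis: one must simultaneously control $2^m-1$ subset-error events and show that each one vanishes under the corresponding inequality in \eqref{eq59}, which requires a correctly tuned definition of the jointly typical set $A_\epsilon^{(n)}$ and an application of the conditional joint AEP that properly conditions on $X(S^c)$. The converse, by contrast, is mostly mechanical once Fano's inequality is invoked on the right conditional entropy. Since the excerpt presents the theorem as a direct citation of the classical result, I would in practice just sketch these two arguments and refer the reader to \cite{cover1999elements} for the standard technical details, noting only that the complex Gaussian specialization of the mutual information gives exactly the logarithmic capacity bound appearing in \eqref{eq23}.
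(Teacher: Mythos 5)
Your outline is correct and is essentially the standard Cover--Thomas proof of the MAC coding theorem, which is exactly what the paper itself relies on: the paper does not prove this theorem at all, but states it as a direct citation of \cite{cover1999elements} and only proves the subsequent step, namely the passage from \eqref{eq59} to the explicit bound \eqref{eq23}. For that step there is a small but genuine difference of route. You evaluate $I(X(S);Y\mid X(S^c))$ exactly by assuming independent Gaussian inputs $X_i\sim\mathcal{CN}(0,P_ig_i)$, obtaining equality with $\log\bigl(1+\sum_{i\in S}P_ig_i/N_0\bigr)$; the paper instead derives \eqref{eq60} as an \emph{upper bound} valid for any input distribution, using $h\bigl(\sum_{i\in S}X_i+Z\bigr)\le\log\bigl[2\pi e(\sum_{i\in S}P_ig_i+N_0)\bigr]$ from the maximum-entropy property of the Gaussian. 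The two are of course consistent (the bound is achieved by Gaussian inputs), but the paper's version is the one actually needed to justify \eqref{eq23} as a constraint that any admissible rate tuple must satisfy, whereas yours additionally establishes achievability of the boundary. Your sketch of achievability (random coding, joint typicality, union bound over the $2^m-1$ subset error events) and of the converse (Fano plus time sharing) is sound but is material the paper deliberately leaves to the reference, so in context it is redundant rather than wrong.
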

Then, we expand the mutual information in \eqref{eq59} in terms of relative entropy as follows:
\begin{equation}
\begin{aligned}
    & I(X(S);Y \mid X(S^c)) \\
    &\overset{(a)}{=}   h(Y \mid X(S^c)) - h(Y \mid X_1,\cdots,X_m)\\
    &= h(X_1+\cdots+X_m+Z \mid X(S^c))\\
    &\quad-h(X_1+\cdots+X_m+Z\mid X_1,\cdots,X_m)\\
    &= h\left(\sum_{i \in S} X_i +Z\right)-h(Z)\\
    &\overset{(b)}{\le} \log\left[2\pi e\left(\sum_{i \in S} P_ig_i + N_0\right)\right]-\log(2\pi e N_0)\\
    &=\log\left(1+\frac{\sum_{i \in S}P_ig_i}{N_0}\right),
\end{aligned}
\label{eq60}
\end{equation}
where $h(\cdot)$ denotes the differential entropy, (a) is due to $I(A;B\mid C)=h(B\mid C)-h(B \mid A,C)$, and (b) follows from the fact that the entropy is maximized when the random variables involved (here $\sum_{i \in S} X_i +Z$ ) are distributed as Gaussian variables with  variance $\sum_{i \in S} P_ig_i + N_0$, and the fact that $h(Z)=\log(2\pi eN_0)$  for Gaussian noise $Z \sim \mathcal{CN}(0,N_0)$ \cite{cover1999elements}. Hence, based on \eqref{eq59} and \eqref{eq60}, we can obtain \eqref{eq23}, which shows that the sum of the individual rates is bounded above by the rate achieved by a single transmitter sending with a power equal to the sum of the individual powers.

\section*{APPENDIX B\\Proof of Lemma 1}

 According to \cite{rudin1964principles}, if we can prove that when $t \to \infty$, the supremum and infimum of $\left\|\bar{\boldsymbol{x}}^{t}-\boldsymbol{x}^{t}\right\|$ are zeroes, then \eqref{eq41} must hold. The proof can be completed via the following two steps.

1. We first prove that $\liminf _{t \rightarrow \infty}\left\|\bar{\boldsymbol{x}}^{t}-\boldsymbol{x}^{t}\right\|=0$.

Due to the strong convexity of $\bar{f}^t_0(\boldsymbol{x})$, we have
\begin{equation}
\begin{aligned}
    \nabla^{T} \bar{f}_{0}^{t}\left(\boldsymbol{x}^{t}\right) \boldsymbol{d}^{t} &\leq-\eta_1\left\|\boldsymbol{d}^{t}\right\|^{2}+\bar{f}_{0}^{t}\left(\bar{\boldsymbol{x}}^{t}\right)-\bar{f}_{0}^{t}\left(\boldsymbol{x}^{t}\right)\\
    &\leq-\eta_1\left\|\boldsymbol{d}^{t}\right\|^{2},
\end{aligned}
\label{eq48}
\end{equation}
where $\eta_1>0$, $\boldsymbol{d}^t=\bar{\boldsymbol{x}}^t-\boldsymbol{x}^t$ and the last inequality is due to the fact $\bar{f}_{0}^{t}\left(\bar{\boldsymbol{x}}^{t}\right)-\bar{f}_{0}^{t}\left(\boldsymbol{x}^{t}\right) \le 0$ always holds since $\bar{\boldsymbol{x}}^t$ is the optimal solution of the problem \eqref{eq40}, whereas $\boldsymbol{x}^t$ is a feasible solution. Besides, since the gradient of ${f}_0(\boldsymbol{x})$ is Lipschitz continuous, we can obtain
\begin{equation}
    \begin{aligned} 
  f_0(\boldsymbol{x}^{t+1})&\overset{(a)}{\le} f_0(\boldsymbol{x}^t)+\eta_2\beta^t\nabla ^Tf_0(\boldsymbol{x}^t)\boldsymbol{d}^t \\
&+L_0(\eta_2)^2(\beta^t)^2\left \| \boldsymbol{d}^t \right \|^2 \\
&\overset{(b)}{\le} f_0(\boldsymbol{x}^t)-\eta_1\eta_2\beta^t\left \| \boldsymbol{d}^t \right \|^2+O(\beta^t), 
\end{aligned}
\end{equation}
where $L_0>0$, (a) is due to $\boldsymbol{x}^{t+1}=\boldsymbol{x}^t+\eta_2\beta^t\boldsymbol{d}^t$, $\eta_2>0$ (\eqref{eq39} is a convex problem) and the Lipschitz gradient continuity of $f_0(\boldsymbol{x})$, and in (b), we use \eqref{eq48} and the fact that $\|\nabla^T\bar{f}_{0}(\boldsymbol{x}^t)-\nabla^T {f}_{0}(\boldsymbol{x}^t)\|=0$. 

Then, we prove $\liminf _{t \rightarrow \infty}\left\|\bar{\boldsymbol{x}}^{t}-\boldsymbol{x}^{t}\right\|=0$ by contradiction. Assuming that there exists a positive constant $\rho$ such that positive $\liminf _{t \rightarrow \infty}\left\|\bar{\boldsymbol{x}}^{t}-\boldsymbol{x}^{t}\right\|\ge \rho >0$ holds, then we can easily find a sequence of $\boldsymbol{d}^t$ that satisfies $\|\boldsymbol{d}^t\|\ge \rho$ for all $t$. Thus by choosing a sufficiently large $t_0$, there always exists $\bar \eta >0$ such that
\begin{equation}
    f_0(\boldsymbol{x}^{t+1})-f_0(\boldsymbol{x}^{t})\le -\beta^t\bar{\eta}\|\boldsymbol{d}^t\|^2,\forall t \ge t_0,
    \label{eq50}
\end{equation}
therefore, it follows from \eqref{eq50} that 
\begin{equation}
    f_0(\boldsymbol{x}^t)-f_0(\boldsymbol{x}^{t_0})\le -\bar{\eta}(\rho)^2\sum_{j=t_0}^{t} \beta^j.
    \label{eq51}
\end{equation}
By letting $t \to \infty$, \eqref{eq51} contradicts the boundedness of $\{f_0(\boldsymbol{x}^t)\}$ given the fact that $\sum_{j=t_0}^{\infty} \beta^j=\infty$. Therefore, we have $\liminf _{t \rightarrow \infty}\left\|\bar{\boldsymbol{x}}^{t}-\boldsymbol{x}^{t}\right\|=0$.

2. Then, we prove $\limsup _{t \rightarrow \infty}\left\|\bar{\boldsymbol{x}}^{t}-\boldsymbol{x}^{t}\right\|=0$.

It follows from the Lipschitz continuity and strong convexity of $\bar{f}^t_i(\boldsymbol{x}),i=0,\cdots,m$ that\cite{liu2019stochastic} 
\begin{equation}
    \left \| \bar{\boldsymbol{x}}^{t_1} -\bar{\boldsymbol{x}}^{t_2}\right \| \le \hat{L}\left \| {\boldsymbol{x}}^{t_1}-{\boldsymbol{x}}^{t_2} \right \|+e(t_1,t_2) ,
    \label{eq52}
\end{equation} 
where $\hat L>0$ and $\lim_{t_1,t_2 \to \infty} e(t_1,t_2)=0$. Then, similar to the previous step, we prove $\limsup _{t \rightarrow \infty}\left\|\bar{\boldsymbol{x}}^{t}-\boldsymbol{x}^{t}\right\|=0$ by contradiction. 

Suppose that $\limsup _{t \rightarrow \infty}\left\|\bar{\boldsymbol{x}}^{t}-\boldsymbol{x}^{t}\right\|>0$, since we have proved $\liminf _{t \rightarrow \infty}\left\|\bar{\boldsymbol{x}}^{t}-\boldsymbol{x}^{t}\right\|=0$, it follows that there exists a $\delta >0$ such that both $\|\boldsymbol{d}^t\| \ge 2\delta$ and $\|\boldsymbol{d}^t\| < \delta$ holds for infinitely many $t$. Thus, we can always find an infinite set of indexes, denoted by $\mathcal{T}$, which satisfies the following property: for any $\forall t \in \mathcal{T}$, there exists $t_2>t$, such that
\begin{equation}
    \begin{aligned}
         &\left \| \boldsymbol{d}^t \right \| \le \delta,\\
&\left \| \boldsymbol{d}^{t_2} \right \| \ge 2\delta,\\
&\delta <  \left \| \boldsymbol{d}^n \right \| < 2\delta, t<n<t_2,
    \end{aligned}
    \label{eq53}
\end{equation}
holds. Then, based on \eqref{eq53}, we have
\begin{equation}
    \begin{aligned} 
 \delta &\le \left \| \boldsymbol{d}^{t_2} \right \| -\left \| \boldsymbol{d}^{t} \right \|\\
& \le \left \| \boldsymbol{d}^{t_2} -\boldsymbol{d}^{t}\right \| \\
& = \left \| (\bar{\boldsymbol{x}}^{t_2}-\bar{\boldsymbol{x}}^{t})-({\boldsymbol{x}}^{t_2}-{\boldsymbol{x}}^{t}) \right \|\\
& \le  \left \| \bar{\boldsymbol{x}}^{t_2}-\bar{\boldsymbol{x}}^{t}\right\|+\left\|{\boldsymbol{x}}^{t_2}-{\boldsymbol{x}}^{t}\right \|\\
& \overset{(a)}{\le} (1+\hat L)\left\|{\boldsymbol{x}}^{t_2}-{\boldsymbol{x}}^{t}\right \|+e(t_2,t)\\
& = (1+\hat L)\|{\boldsymbol{x}}^{t_2}-{\boldsymbol{x}}^{t_2-1}+{\boldsymbol{x}}^{t_2-1}-{\boldsymbol{x}}^{t_2-2}+\cdots\\
&+{\boldsymbol{x}}^{t+1}-{\boldsymbol{x}}^{t} \|+e(t_2,t)\\
& \le (1+\hat L)\sum_{n=t}^{t_2-1} \hat{\eta}\beta^n\|\boldsymbol{d}^n\| +e(t_2,t)\\
& \le 2\delta \hat{\eta}(1+\hat L)\sum_{n=t}^{t_2-1} \beta^n+e(t_2,t),
\end{aligned}
\end{equation}
where $\hat \eta>0$ and  (a) is obtained by resorting to \eqref{eq52}. Then we obtain
\begin{equation}
    \liminf_{t\to \infty} \sum_{n=t}^{t_2-1} \beta^n \ge \delta_1 = \frac{1}{2\hat\eta(1+\hat L)} >0.
    \label{eq55}
\end{equation}
Invoking \eqref{eq50}, there exists $\bar \eta>0$, for $\forall t \in \mathcal{T}$, such that
\begin{equation}
    f_0(\boldsymbol{x}^{n+1})-f_0(\boldsymbol{x}^{n})\le -\beta^n\bar{\eta}\|\boldsymbol{d}^n\|^2,t<n<t_2,
\end{equation}
then for sufficiently large $t$ we have
\begin{equation}
    f_0(\boldsymbol{x}^{t_2})-f_0(\boldsymbol{x}^{t})\le -\bar{\eta}\delta^2\sum_{n=t}^{t_2-1} \beta^n.
\end{equation}
Since the convergence of $\{f_0(\boldsymbol{x}^t)\}$, there must be $\liminf_{t\to \infty} \sum_{n=t}^{t_2-1} \beta^n = 0$ which contradicts \eqref{eq55}. Therefore, it can be shown that $\limsup _{t \rightarrow \infty}\left\|\bar{\boldsymbol{x}}^{t}-\boldsymbol{x}^{t}\right\|=0$.

Combining the results of the above two steps, we have $\lim _{t \rightarrow \infty}\left\|\bar{\boldsymbol{x}}^{t}-\boldsymbol{x}^{t}\right\|=0$, which completes the proof.

\bibliographystyle{IEEEtran}
\bibliography{myref.bib}

\vfill

\end{document}